\newcommand\mathcircled[1]{%
  \mathpalette\@mathcircled{#1}%
}
\newcommand\@mathcircled[2]{%
  \tikz[baseline=(math.base)] \node[draw,circle,inner sep=1pt] (math) {$\m@th#1#2$};%
}
\newtheorem{theo}{\textbf{Theorem}}[section]
\newtheorem{lem}[theo]{\textbf{Lemma}}
\title{Analysis of a functional response with prey-density dependent handling time from an evolutionary perspective}
\author{
  Cecilia Berardo \\
  Department of Mathematics and Statistics\\
  FI-00014 University of Helsinki, Finland\\
  \texttt{cecilia.berardo@helsinki.fi} \\
   ORCID: 0000-0002-1729-3765
   \And
 Stefan Geritz \\
 Department of Mathematics and Statistics\\
 FI-00014 University of Helsinki, Finland\\
 \texttt{stefan.geritz@helsinki.fi} \\
 ORCID: 0000-0002-7865-3541    
 }
\begin{document}
\maketitle

\begin{abstract}
Theoretical models show that in a non-constant environment two predator species feeding on one and the same prey may coexist because the two species occupy different temporal niches: the one with the longer handling time has the advantage when prey is rare so that holding on to the same catch is the better option, while the species with the shorter handling time has the advantage when prey is common and easy to catch. In this paper we address the question whether a predator species with a handling time that is not fixed but decreases with prey density could be selective superior regardless of whether the prey is rare or common, as such predator would be able to occupy both temporal niches all by itself. 

To that end we study the Rosenzweig-MacArthur model with a modified Holling type II functional response with a density dependent handling time and a handling time dependent conversion factor. We find that the population dynamics tend to be richer than that of the standard model with fixed handling times because of the possibility of multiple positive equilibria and positive attractors. Increasing the strength of the density dependence eventually stabilises the population.

Using the framework of adaptive dynamics, we study the evolution of the strength of the density dependence. We find that a predator with even a weakly density dependent handling time can invade both monomorphic and dimorphic populations of predators with fixed handling times. Eventually the strength of the density dependence of the handling time evolves to a level where population cycles are lost, with that the possibility of predator coexistence as well.
\end{abstract}

\keywords{Predator-prey model \and Handling time \and Conversion factor \and Nonequilibrium coexistence \and Darwinian evolution \and Adaptive dynamics \and Evolutionary branching  \and Evolutionary stability \and Convergence stability \and Edge of stability \and Generalisation \and Specialisation}


\section{Introduction}\label{sec1}

It is a well established theoretical result that coexistence of multiple predator species competing for the same prey is possible, but only if the system exhibits non-equilibrium dynamics such as population cycles or chaos (see, e.g., \cite{koch1974competitive},  \cite{mcgehee1977some},  \cite{levins1979coexistence}, \cite{muratori1989remarks},  \cite{huisman1999biodiversity}, \cite{abrams2003dynamics},  \cite{liu2003relaxation}, \cite{wilson2004coexistence}). Additional conditions on the shape of the predator's functional and numerical response as functions of the prey density are typically required (\cite{armstrong1980competitive}). For example, \cite{abrams2002impact} showed that two predator species, both with a Holling type II functional response, can coexist provided that the handling times and the conversion factors (from number of prey captured into number of predator offspring produced) are sufficiently different for the two species.\\

Coexistence is possible because the two predator species occupy different ecological niches in the following sense: the species with the longer handling time has the advantage during the phase of the population cycle when the prey is rare so that holding on to the present catch is still the better option, even if it has lost much of its nutritious value. However, during the phase of the cycle when the prey is abundant, hanging on to the same corpse for a long time is not worthwhile anymore, because fresh prey is easy to come by, and so a shorter handling time is favourable.\\

The topic of coexistence of multiple predator species competing for the same prey was moved from the ecological theatre to an evolutionary context by \cite{geritz2007evolutionary}. Using the method of adaptive dynamics (\cite{metz1995adaptive}, \cite{geritz1997dynamics}, \cite{geritz1998evolutionarily}, \cite{geritz1999evolutionary}), they showed that an evolutionarily stable form of coexistence of two predators is not just ecologically feasible but can also evolve via evolutionary branching of the handling time in an initially monomorphic predator population. \\

In this paper we investigate the evolutionary consequences when the handling time is not constant, but depends on the prey density. In particular, if the handling time is a decreasing function of the prey population density, then such a predator might occupy both niches (i.e. short handling time and long handling time) at the same time, leaving no room for coexistence with others. The two main questions are:
\begin{enumerate}[label=(\roman*)]
\item\label{q1} Is branching (as seen in the model of \cite{geritz2007evolutionary}) prevented by the possibility of the evolution of a density dependent handling time?
\item\label{q2} Is evolutionarily stable coexistence of two predator species with fixed handling times (as seen in the model of \cite{geritz2007evolutionary}) prevented by the possibility of the evolution of a density dependent handling time? 
\end{enumerate}

These are basically questions about generalisation (density dependent handling time) versus specialisation (fixed handling time).
To address these questions, we study the evolution of the density dependence of handling time in the following generalisation of the Rosenzweig-MacArthur model (\cite{rosenzweig1963graphical}) for a single prey species $x$ and potentially multiple predator species with population size $y_i, i=1,...,n$ 
\begin{eqnarray}\label{RMA1}
\frac{dx}{dt}&=& r x \left(1- \frac{x}{K} \right) - \sum_{i=1}^n \frac{\beta x y_i}{1+\beta h_i(x)x}\\\label{RMA2}
\frac{dy_i}{dt}&=& \gamma(h_i(x))\frac{\beta x y_i}{1+h_i(x)\beta x}-\delta y_i, \qquad i=1,...,n.
\end{eqnarray}
When the predators are absent, the prey population grows according to the logistic equation. Each predator species $i$ has a Holling type II functional response with a prey density dependent handling time $h_i(x)$. The conversion factor $\gamma(h_i(x))$, modelling the number of newborn predators produced per captured prey, depends on the handling time and therefore indirectly on the prey density as well. \\

That the conversion factor depends on the handling time is an ecologically reasonable assumption: handling includes eating and digesting the captured prey. Therefore, the total amount of nutrient ingestion is likely to increase with the length of the handling time and, as a consequence, so does the conversion factor. How exactly the conversion factor depends on the handling time we explain in Section \ref{sec2} of this paper.\\

How the handling time would depend on prey density is maybe less clear. That, too, we explain in Section \ref{sec2}, where we derive the handling time from an underlying microscopic model for the interactions between individual predators and prey. In this way, we obtain a representation of the strength of the density dependence in terms of event rates on the microscopic level. These event rates, subject to mutation and selection, are the focal traits of which we study the evolution, using the theory of adaptive dynamics.\\

In Section \ref{sec3}, we analyse the population dynamics of the one-prey-one-predator resident population. The dynamics defines the selective environment in which a new mutation in the event rates underlying the density dependence of the handling time may or may not invade. \\

To find out which mutant strategies can invade a resident population with a given resident strategy, we study, in Section \ref{sec4}, the population dynamics of the one-prey-two-predator system, one predator being the resident and the other the mutant. In particular, we study the transversal stability of the boundary equilibrium where the resident is common but the mutant is absent. \\

The results in Section \ref{sec4} extend the analysis given by \cite{geritz2007evolutionary} on evolutionary branching in case of fixed handling time. The aim is to find out whether, in our more general model, evolution leads to the coexistence of two predator types each with a different fixed handling time or to a single predator type with a prey density dependent handling time. Finally, we investigate if a predator with prey density dependent handling time can invade the evolutionarily and convergence stable coexistence of two fixed handling times modelled by \cite{geritz2007evolutionary}.

\section{Derivation of the ecological model}\label{sec2}
Compared with the standard Rosenzweig-MacArthur predator-prey model (\cite{rosenzweig1963graphical}), the new elements in equations (\ref{RMA1}) and (\ref{RMA2}) are the prey density dependent handling time $h_i(x)$ and the handling time dependent conversion factor $\gamma(h_i(x))$. \\

The Holling type II functional response with a constant handling time can be derived from a system of fast state transitions for the predator between a searching state and a handling state on a time-scale during which total prey and predator densities remain constant. The fraction of time that an individual predator spends in the searching state determines the average \emph{per capita} rate of prey capture, i.e., the functional response. Using the mass-action assumption, the transition from searching to handling happens at a rate that is proportional to the prey density. If the rate of the reverse transition from handling to searching is assumed to be a constant, then the Holling type II functional response 

\begin{equation}
f(x,h) = \frac{\beta x}{1+\beta h x}
\end{equation}
is retrieved (\cite{metz2014dynamics}). \\

However, if the transition from handling to searching can also be brought about by a chance encounter with a live prey individual, then the transition rate will get a prey density dependent component. \cite{Berardo:2020aa} showed that in that case we derive the Holling type II functional response $f(x,h(x))$ with the density dependent handling time 

\begin{equation}
h(x) = \frac{1}{b x + c}
\end{equation}
where $bx+c$ is the transition rate from handling to searching with a density dependent part ($bx$) and a constant part ($c$), (see Appendix \ref{app1} for details on the mechanistic derivation and Figure \ref{fig:sec2} for the plot of the corresponding functional response). 
For $b=0$ we recover the ordinary Holling type II functional response, which saturates to a level $c$ as $x$ goes to infinity. For $b>0$, the response no longer saturates but keeps increasing (see Figure \ref{fig:sec2}). \\

In this paper we study the evolution of the pair $(b,c)$ as a two-dimensional trait, subject to mutation and selection using the methods of adaptive dynamics ( \cite{metz1992should}, \cite{metz1995adaptive}, \cite{geritz1997dynamics}, \cite{geritz1998evolutionarily},  \cite{geritz1999evolutionary}). \\

\begin{figure}[H]
\centering
\includegraphics[width=8cm,height=6cm]{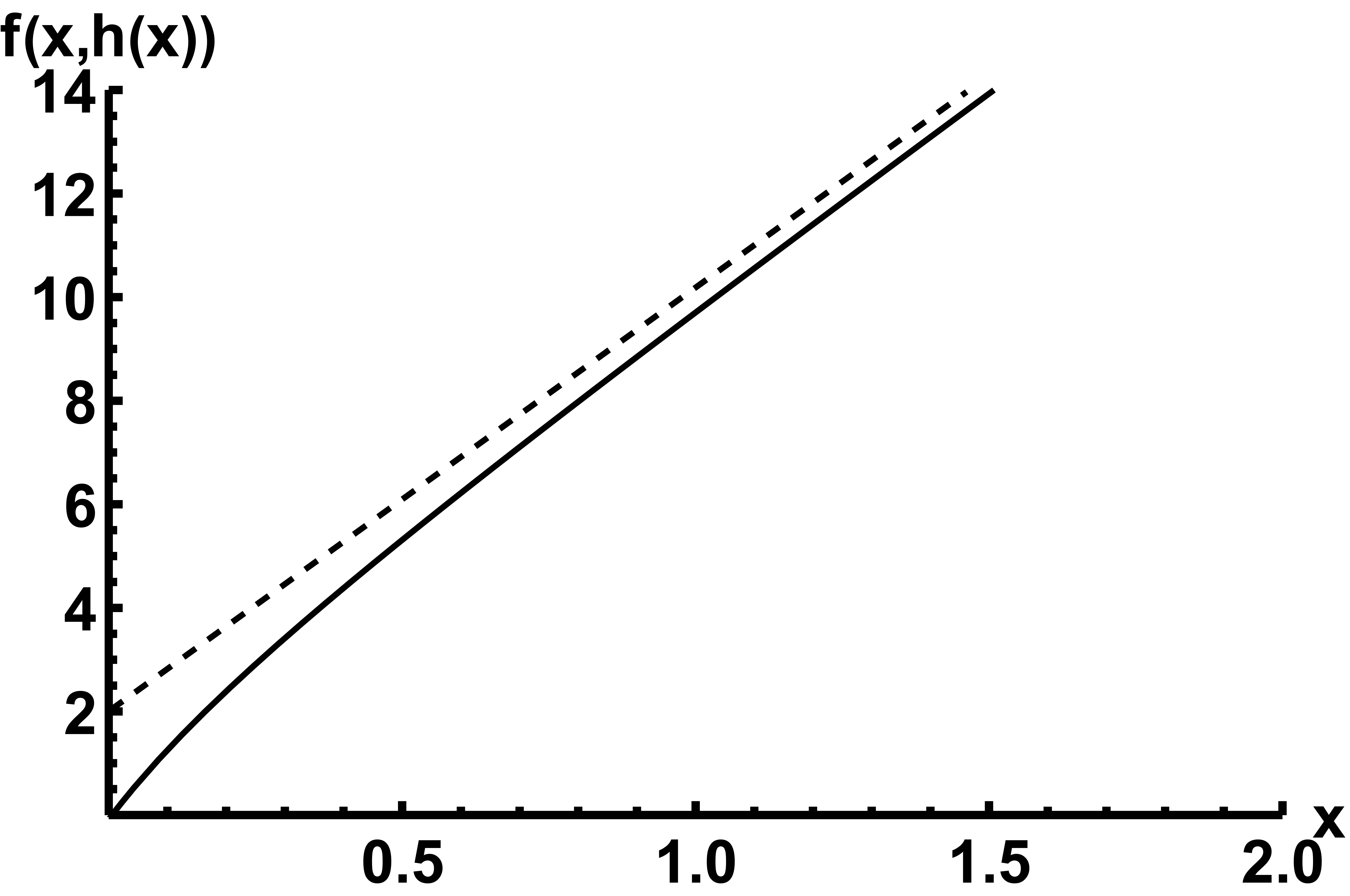}
\caption{The prey density dependent functional response derived by \cite{Berardo:2020aa}.}
\label{fig:sec2}
\end{figure}

The handling time dependent conversion factor $\gamma(h)$, expressed as the number of predator offspring that can be produced per captured prey, is calculated as

\begin{equation}\label{biggammadef}
\gamma(h)=\int_0^\infty \rho(\tau) e^{-\frac{1}{h}\tau} d\tau
\end{equation}
where $\rho(\tau)$ is the nutrient ingestion rate at $\tau$ time units after prey capture expressed in terms of number of predator offspring per unit of handling time. The exponential $e^{-\frac{1}{h}\tau}$ is the probability that prey handling is still going on at time $\tau$.\\

Note that there is a one-to-one relation between $\gamma$ and $\rho$ via the Laplace transform of $\rho$ and the function $\tilde\gamma$ defined such that $\tilde\gamma(\frac{1}{h})=\gamma(h)$. Specifically, $\tilde\gamma=\mathcal{L}\rho$ and $\rho=\mathcal{L}^{-1}\tilde\gamma$, where $\mathcal{L}$ is the Laplace transform operator and $\mathcal{L}^{-1}$ its inverse. Notwithstanding the one-to-one relation between $\gamma$ and $\rho$, from a modelling point of view the function $\rho$, being defined and interpreted on the individual level, is the more fundamental entity. We will assume that
\begin{enumerate}[label=(\roman*)]
\item\label{conddue} $\rho(\tau)$ is non-negative,
\item \label{conduno} $\rho(\tau)$ is integrable such that the integral over all positive $\tau$ is finite. 
\end{enumerate}

The first assumption can be interpreted as reproduction being nutrient rather than energy limited, so that the accumulated amount of resources acquired is a non-decreasing function of time in spite of possible energy costs of prey handling. The second assumption means that the total amount of nutrients that can be obtained from a single prey is finite. \\

As a first example, consider

\begin{equation}\label{rho1}
   \rho(\tau) = \gamma_0 \tau^k e^{-\frac{\tau}{\tau_0}} 
\end{equation}

for $k \ge 0$ and $\gamma_0,\tau_0>0$. Here the nutrient intake rate declines exponentially at large values of $\tau$. Such would be the case, for instance, if the search for nutritious parts of the captured prey is blind and random. The increase of $\rho(\tau)$ at small values of $\tau$ and $k>0$ (see Figure \ref{fig:examples}a) would occur if nutrient acquisition requires preparation such as dragging the prey to a safe location and opening up the carcass. The corresponding conversion factor $\gamma(h)$ as given by (\ref{biggammadef}) then becomes 

\begin{equation}\label{gamma1}
   \gamma(h) = \gamma_0 \Gamma(1+k) \left( \frac{h \tau_0}{h + \tau_0} \right)^{1+k} 
\end{equation}
(see Figure \ref{fig:examples}b), where $\Gamma(1+k) = k!$ is the gamma function for integer as well as non-integer values of $k$.\\

As a second example, consider

\begin{equation}\label{rho2}
   \rho(\tau) = \left\{
   \begin{array}{ll}
      \gamma_0 & \textrm{if $\tau_0 \le \tau \le \tau_1$} \\\\
      0 & \textrm{otherwise}
   \end{array}
   \right. 
\end{equation}
for $0 \le \tau_0 < \tau_1$ and $\gamma_0>0$. Here the nutrient intake rate is piecewise constant and becomes zero after a finite time when the carcass is fully spent (see Figure \ref{fig:examples}c). The initial zero intake rate for low values of $\tau$ and $\tau_0>0$ would occur for similar reasons as in the first example above. The corresponding $\gamma(h)$ becomes

\begin{equation}\label{gamma2}
   \gamma(h) = \gamma_0  \left( e^{-\frac{\tau_0}{h}} - e^{-\frac{\tau_1}{h}} \right) h.
\end{equation}
(see Figure \ref{fig:examples}d).\\

\begin{figure}[H]  
   \centering
   \includegraphics[width=4.5in]{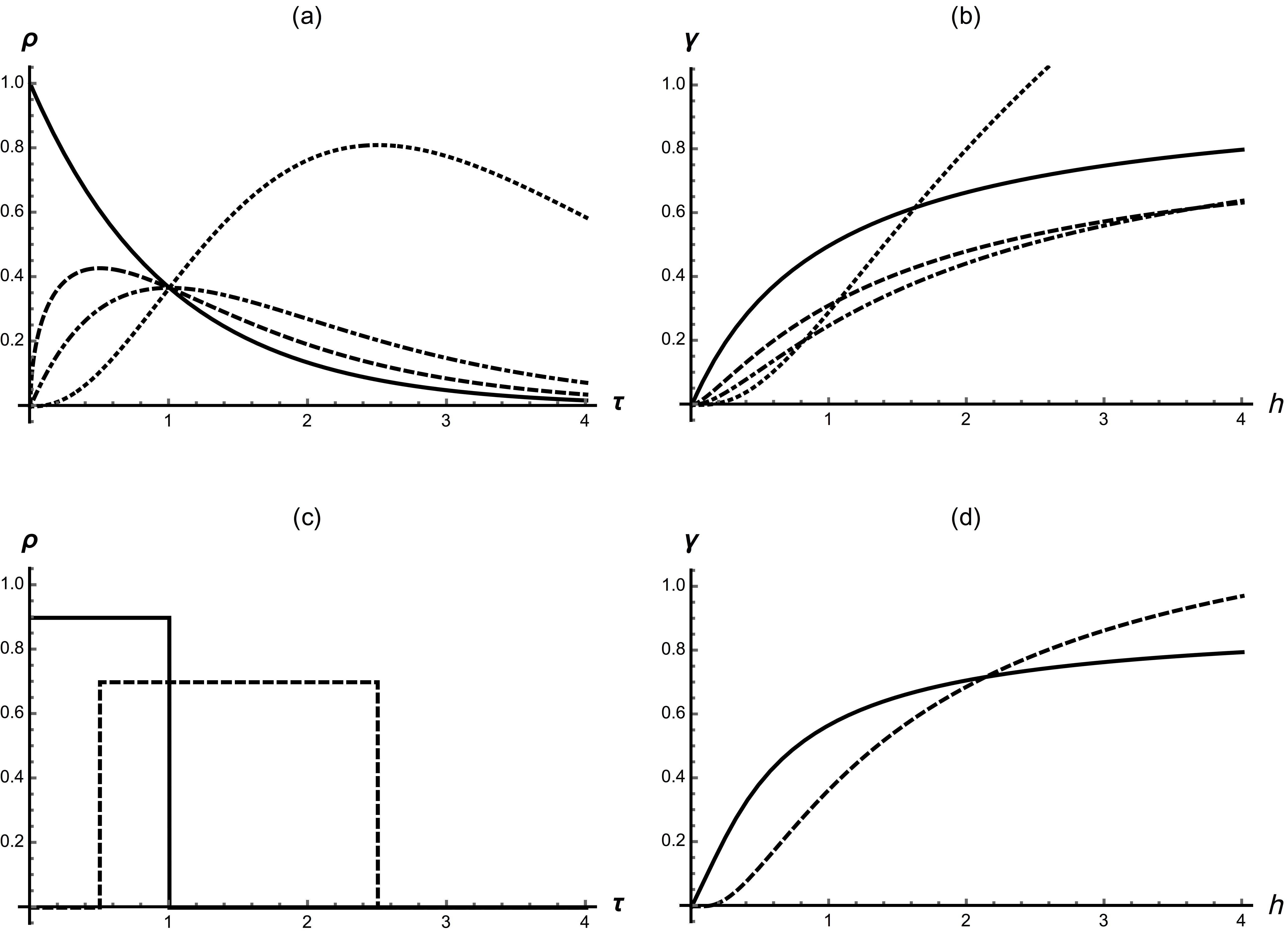} 
   \caption{Graphs $\rho(\tau)$ and corresponding $\gamma(h)$: Panels {\bf(a)} and {\bf(b)} correspond to respectively equations (\ref{rho1}) and (\ref{gamma1}) for $\gamma_0=1$, $\tau_0=1$, and $k=0$ (solid lines), $k=0.5$ (dashed), $k=1$ (dot-dashed), and $k=2.5$ (dotted). Panels {\bf(c)} and {\bf(d)} correspond to respectively equations (\ref{rho2}) and (\ref{gamma2}) for $\gamma_0=0.9$, $\tau_0=0$, $\tau_1=1$ (solid lines) and $\gamma_0=0.7$, $\tau_0=0.5$, $\tau_1=2.5$ (dashed)} 
   \label{fig:examples}
\end{figure}

Because of assumptions \ref{conddue} and \ref{conduno}, $\gamma$ is necessarily non-negative, non-decreasing and bounded as a function of  $h$. However, we emphasise that not every non-negative, non-decreasing and bounded function can be written as (\ref{biggammadef}) with $\rho$ satisfying the assumptions. As a counter-example we mention the $\gamma$ used by \cite{geritz2007evolutionary}, which was given numerically as an interpolating function through data points, and which, for that reason, is practically impossible to accurately inverse Laplace transform to obtain the corresponding $\rho$. However, their $\gamma$ is well approximated by the explicit function

\begin{equation} \label{gamma2007}
   \gamma(h) = \frac{A h^2}{C+h^2} + \frac{B h}{D+h}
\end{equation}
with $A= -3.06$, $B= 7.11$, $C=1.09$ and $D=1.36$, at least within the relevant range of $h$ (see Figure \ref{fig:approx}a). This function is non-negative, non-decreasing and bounded. The corresponding $\rho$ is 

\begin{equation} \label{rho2007}
   \rho(\tau) = \frac{A}{\sqrt C}\sin\left( \frac{\tau}{\sqrt C} \right) + B e^{-\tau}
\end{equation}
which, however, is periodically negative (see Figure \ref{fig:approx}b) and moreover cannot be integrated over all positive $\tau$, and therefore violates assumptions \ref{conddue} and \ref{conduno}. This illustrates that it is important to first model the nutrient acquisition in terms of the intake rate $\rho$ and from that derive the conversion factor $\gamma$. If one follows the opposite order and first chooses the $\gamma$, then there is no guarantee that it has an interpretation in terms of nutrient acquisition on the level of the individual prey and predator. \\

The above also presents a problem for the present article: how can we investigate under what conditions a hypothetical predator species with a density dependent handling time can invade the evolutionarily stable coexistence in the model of \cite{geritz2007evolutionary} if their $\gamma$ cannot exist in our present model? To solve this conundrum we will instead use 

\begin{equation} \label{rho2021}
   \rho(\tau) = \gamma_0 e^{-\frac{\tau}{\tau_0}} +\gamma_1 \tau e^{-\frac{\tau}{\tau_1}}
\end{equation}
with $\gamma_0=13.5651$, $\gamma_1=0.100023$, $\tau_0=0.117128$, $\tau_1=20$ (see Figure \ref{fig:approx}d). Both terms of $\rho$ have the form of (\ref{rho1}) with $k=0$ and $k=1$, respectively representing two kinds of resources: one that is readily available (first term) and one that requires some initial effort before it becomes accessible (second term). The corresponding $\gamma$ becomes

\begin{equation} \label{gamma2021}
   \gamma(h) = \gamma_0 \frac{h \tau_0}{h+\tau_0} + \gamma_1 \left( \frac{h \tau_1}{h+\tau_1} \right)^2
\end{equation}
(see Figure \ref{fig:approx}c). This $\gamma$ produces the same evolutionary behaviour as in \cite{geritz2007evolutionary} locally in the neighbourhood of the branching point $h=1.6$, but has a clear interpretation on the individual level in terms of nutrient acquisition. We come back to this in Section \ref{sec4}.\\

In order to obtain a better fit of the function by \cite{geritz2007evolutionary} and partially solve the numerical errors at large values of $h$ in the simulations (see Section \ref{sec4}), we control $\gamma$ for low and large values of $h$. Given the limited number of free parameters, it is clear that the function in (\ref{gamma2021}) is not suitable for this procedure. Therefore, we consider the six free-parameters function 

\begin{equation}\label{rho2021-2}
   \rho(\tau) = \gamma_{1} \tau^{k_1} e^{-\frac{\tau}{\tau_{1}}} +\gamma_{2} \tau^{k_2} e^{-\frac{\tau}{\tau_{2}}}.
   \end{equation}
Here the nutrient intake is obtained again from two resources which require some effort to be extracted from the prey carcass. The corresponding conversion factor becomes

\begin{equation} \label{gamma2021-2}
\gamma(h) =\gamma_1 \left(\frac{1}{h}+\frac{1}{\tau_1} \right)^{-1-k_1} \Gamma(1+k_1)+\gamma_2  \left(\frac{1}{h}+\frac{1}{\tau_2} \right)^{-1-k_2} \Gamma(1+k_2).
\end{equation}   
We fit $\gamma$ to the data points by \cite{geritz2007evolutionary} so that the local evolutionary dynamics around the branching point $h=1.6$ is the same. Then we fix an upper limit for $\gamma$ at $h\rightarrow\infty$ and we control the function at a small value of $h$, $h=0.4$, to obtain the set of parameters: $\gamma_1=42.691197$, $\gamma_2=0.014531$,$\tau_1=0.087297$, $\tau_2=0.884403$, $k_1=0.292140$ and $k_2=5.844259$ (see Figure \ref{fig:approx}e-f).\\

Finally, we fit the function in (\ref{gamma2021-2}) to the data points by \cite{geritz2007evolutionary} which produce the evolutionarily and convergence stable (ESS-) coexistence of the two fixed strategies $(h_1,h_2)=(1.5,4.5)$. We obtain the parameter values $\gamma_1=0.192742$, $\gamma_2=16.073108$,$\tau_1=27.416421$, $\tau_2=0.065766$, $k_1=0.146816$ and $k_2=0.146816$ (Figure \ref{fig:approx}g-h). We will use the function $\gamma$ in (\ref{gamma2021-2}) with the above parameter set to check if the ESS-dimorphism (resulting from evolutionary branching) of the two fixed handling times (with $h=\frac{1}{bx+c}$ and $b=0$) is attracting or repelling when we let the parameter determining density dependence, $b$, evolve to positive values.

\newpage

  \begin{figure}[H]  
   \centering
   \includegraphics[width=4in]{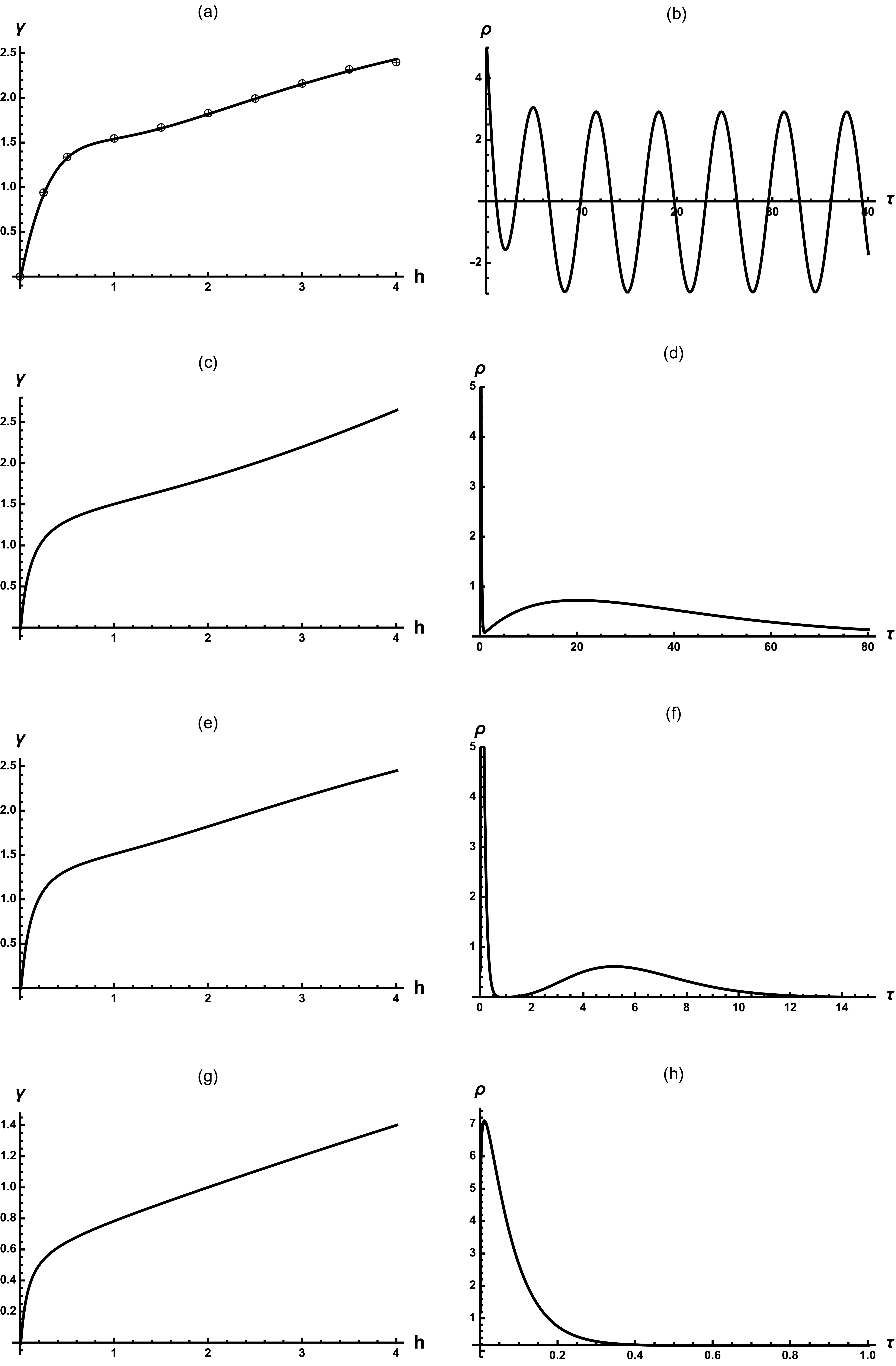} 
   \caption{{\bf(a)} The function $\gamma$ from equation (\ref{gamma2007}) approximating the function used by \cite{geritz2007evolutionary} some reference points of which are indicated by circles, and {\bf(b)} the corresponding $\rho$ from equation (\ref{rho2007}). {\bf(c)} The $\gamma$ from equation (\ref{gamma2021}) and {\bf(d)} the corresponding $\rho$ from equation (\ref{rho2021}). {\bf(e)} The $\gamma$ from equation (\ref{gamma2021-2}) and {\bf(f)} the corresponding $\rho$ from equation (\ref{rho2021-2}) with parameter values $\gamma_1=42.691197$, $\gamma_2=0.014531$,$\tau_1=0.087297$, $\tau_2=0.884403$, $k_1=0.292140$ and $k_2=5.844259$. {\bf(g)} The $\gamma$ from equation (\ref{gamma2021-2}) and {\bf(h)} the corresponding $\rho$ from equation (\ref{rho2021-2}) with parameter values $\gamma_1=0.192742$, $\gamma_2=16.073108$,$\tau_1=27.416421$, $\tau_2=0.065766$, $k_1=0.146816$ and $k_2=0.146816$.} 

   \label{fig:approx}
\end{figure}

\begin{table}[h]
   \caption{List of model variables, parameters and functions} 
   \label{tab:param}
   \small \centering
   \begin{tabular}{m{2cm} m{8cm}} 
   \textbf{Symbol} & \textbf{Description}  \\ \hline \hline
   $x$ & prey density\\
   $y_i$ & density of the predator species of type $i$\\
   $r$ & prey logistic growth rate \\ 
   $K$ & carrying capacity of the prey \\
   $\beta$ & predator's capture rate \\
   $b$ & prey-density specific rate of quitting handling \\
   $c$ & constant rate of quitting handling \\
   $\delta$ & \emph{per-capita} natural mortality rate of the predators\\
   $\rho(\tau)$ & rate of nutrient retention at time $\tau$ since prey capture\\
   \end{tabular}
\end{table}

\section{Resident dynamics}\label{sec3}
In this section we study the population dynamics of the one-prey-one-predator population. We call this the resident population, which defines the selective environment where an initially rare mutant with a different value of $(b,c)$ may or may not invade. 
When only one predator type with strategy $(b,c)$ is present, the system in (\ref{RMA1}) and (\ref{RMA2}) becomes
\begin{eqnarray}\label{monoone}
\frac{dx}{dt}&=&rx\left(1-\frac{x}{K} \right)-\frac{\beta xy}{1+\frac{1}{(bx+c)}\beta x}\\ \label{monotwo}
\frac{dy}{dt}&=&\int_0^\infty \rho(\tau) e^{-(b x+c)\tau} d\tau \frac{\beta xy}{1+\frac{1}{(bx+c)}\beta x}-\delta y.
\end{eqnarray}
Table \ref{tab:param} gives a summary of all functions and parameters. By proper scaling of the variables, parameters and functions (see Appendix \ref{app4}) we can effectively fix the values of $r$, $K$ and $\beta$, so that only the parameters $b$, $c$ and $\delta$ as well as the parameters that determine the function $\rho$ and the corresponding $\gamma$ are still free. Thus, without loss of generality, we set $r=1$, $K=1$ and $\beta=1$. \\

The prey density dependent handling time in the predator's functional response hardly matters for the population dynamics in comparison with the model of \cite{rosenzweig1963graphical}. However, the dependence of the conversion factor, and through that on the prey density as well, potentially makes a big difference. The equilibrium equations are
\begin{eqnarray} \label{eqeq1}
0&=&x(1-x)-f(x,h(x))y\\ \label{eqeq2}
0&=&\gamma(h(x)) f(x,h(x)) y-\delta y
\end{eqnarray}
from which we observe that the trivial equilibrium $E_0=(0,0)$ always exists, and so does the predator-free equilibrium $E_1=(1,0)$. In contrast to the model by \cite{rosenzweig1963graphical}, however, there can be multiple positive equilibria, i.e., with both the prey and the predator present at positive densities. This is because $\gamma(h(x))$ is a monotonically decreasing function of $x$ while $f(x,h(x))$ is monotonically increasing, so that nothing can be said in general about their product (see Appendix \ref{app2}). Multiple positive equilibria may, e.g., give rise to an Allee effect in the predator density (see Figure \ref{fig:sec31}), which is not possible in the standard Rosenzweig-MacArthur model. Moreover, a potential lack of monotony of $\gamma(h(x))f(x,h(x))$ affects not only the potential number of positive equilibria, but also their stability (see Appendix \ref{app5}).\\

\begin{figure}[H]
\centering
  \includegraphics[width=0.45\textwidth]{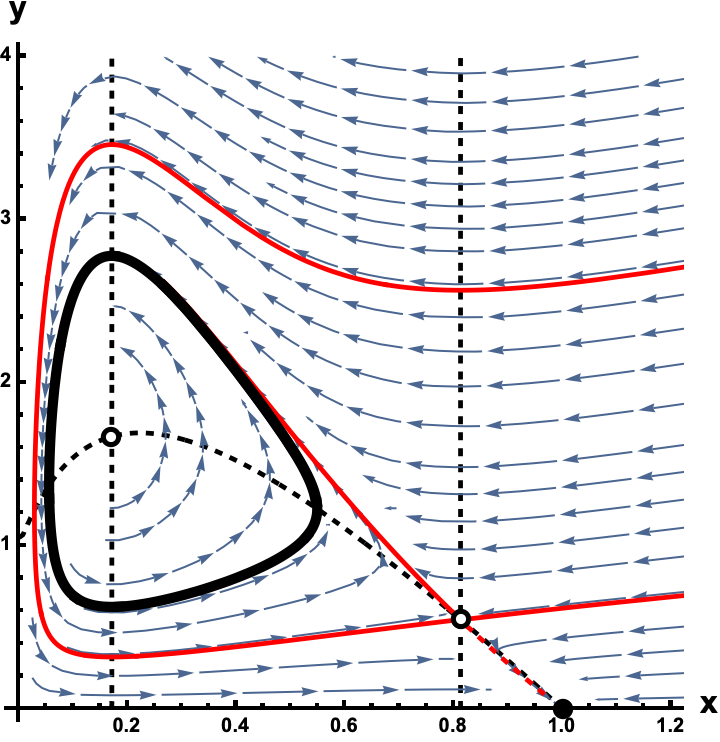}
\caption{Phase plane diagram for the system in (\ref{eqeq1}) and (\ref{eqeq2}) with $b=0.4$, $c=0.1$ (parameters of the function $h$), $\delta=1$, and $\rho(\tau)$ and corresponding $\gamma(h)$ as in (\ref{rho1}) and (\ref{gamma1}) with $k=5$, $\gamma_0=0.25$ and $\tau_0=1$. The dashed black lines are the prey and the predators isoclines. Open circles indicate unstable equilibria, and the black dot a stable equilibrium. The red orbits are the stable and unstable manifolds of the positive saddle and demarcate the basin of attraction of the limit cycle (thick black). The blue arrows indicate the flow of the dynamics. Orbits starting outside the basin of attraction of the limit cycle converge to the predator-free equilibrium. In particular, orbits starting with a low predator density lead to extinction of the predator.}
\label{fig:sec31}        
\end{figure}

We focus on a particular choice of the function $\gamma$ so that we can actually check whether $\gamma(h(x)) f(x,h(x))$ is monotonous or not. Specifically, we first choose $\rho$ and $\gamma$ from equations (\ref{rho2021}) and (\ref{gamma2021}) (see also Figure \ref{fig:approx}c-d), which for convenience are reproduced here
\begin{eqnarray} \label{another1}
   \rho(\tau)& =& \gamma_0 e^{-\frac{\tau}{\tau_0}} +\gamma_1 \tau e^{-\frac{\tau}{\tau_1}},\\ \label{another2}
   \gamma(h)& =& \gamma_0 \frac{h \tau_0}{h+\tau_0} + \gamma_1 \left( \frac{h \tau_1}{h+\tau_1} \right)^2,
\end{eqnarray}
with 

\begin{equation}\label{setparam1}
\gamma_0=13.5651,\quad \gamma_1=0.100023,\quad \tau_0=0.117128, \quad \tau_1=20.
\end{equation}
This is the function that gives evolutionary branching of handling time in the model of \cite{geritz2007evolutionary}, the robustness of which we will test if also density dependent handling times are allowed. \\

For this choice, we numerically confirmed that $\gamma(h(x))f(x,h(x))$ is monotonically increasing on the relevant interval $x\in[0,1]$ for all $(b,c) \in \mathbb{R}^2$. In particular, there is at most one positive equilibrium. When $b=0$, the population dynamics are essentially the same as in the standard Rosenzweig-MacArthur model and the interior equilibrium is stable if it lies on the decreasing part of the prey isocline and unstable if it lies on the increasing part. In the latter case there is a stable limit cycle.\\

When $b>0$ we find that for very low values of $c$ the interior equilibrium is stable (Figure \ref{fig:sec32}a and Figure \ref{fig:sec33}), but as soon as the system undergoes a fold bifurcation, bi-stability with a limit cycle occurs (Figure \ref{fig:sec32}b and Figure \ref{fig:sec33}). The bi-stability vanishes when the unstable limit cycle shrinks on the interior equilibrium (Figure \ref{fig:sec32}c and Figure \ref{fig:sec33}). After the subcritical Hopf bifurcation, the interior equilibrium is unstable and the system exhibits large oscillations as it converges to the stable limit cycle. Since both the prey and predator dynamics fluctuate between very low and large values, the change of variables given in Appendix \ref{app6} becomes fundamental to numerically approximate the large cycles. Further increasing $c$, the stable cycle shrinks to the unstable equilibrium and finally disappears via a supercritical Hopf bifurcation (Figure \ref{fig:sec32}d and Figure \ref{fig:sec33}).\\

As the parameters $b$ and $c$ are the evolutionary variables, Figure \ref{fig:sec34} shows which of the cases applies to where in the $(b,c)$-plane. Note that in this picture we are not able to capture the area of bi-stability of the interior equilibrium with the stable limit cycle and fold bifurcation line as this happens at vey low values of $c$. \\

We find the same population dynamics for the functions $\rho$ and $\gamma$ in (\ref{rho2021-2}) and (\ref{gamma2021-2}), more specifically for
\begin{eqnarray}\label{another12}
   \rho(\tau) &=& \gamma_{1} \tau^{k_1} e^{-\frac{\tau}{\tau_{1}}} +\gamma_{2} \tau^{k_2} e^{-\frac{\tau}{\tau_{2}}} \\\label{another22}
\gamma(h) &=&\gamma_1 \left(\frac{1}{h}+\frac{1}{\tau_1} \right)^{-1-k_1} \Gamma(1+k_1)+\gamma_2  \left(\frac{1}{h}+\frac{1}{\tau_2} \right)^{-1-k_2} \Gamma(1+k_2)
\end{eqnarray}   
with the two sets of parameter values given in Section \ref{sec2}

\begin{eqnarray}\nonumber
\gamma_1=42.691197,\quad \gamma_2=0.014531,\quad \tau_1=0.087297,\\\label{setparam2}
\tau_2=0.884403,\quad k_1=0.292140,\quad k_2=5.844259,
\end{eqnarray}
and

\begin{eqnarray}\nonumber
\gamma_1=0.192742,\quad \gamma_2=16.073108,\quad \tau_1=27.416421,\\\label{setparam3}
\tau_2=0.065766,\quad k_1=0.146816,\quad k_2=0.146816.
\end{eqnarray}

\newpage

\begin{figure}[H]
\centering
\begin{subfigure}[b]{0.35\textwidth}
\includegraphics[width=\textwidth]{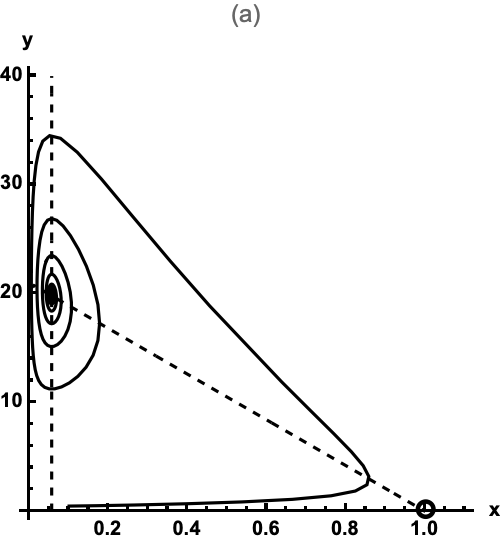}\hspace{0em}
\end{subfigure}
\begin{subfigure}[b]{0.35\textwidth}
\includegraphics[width=\textwidth]{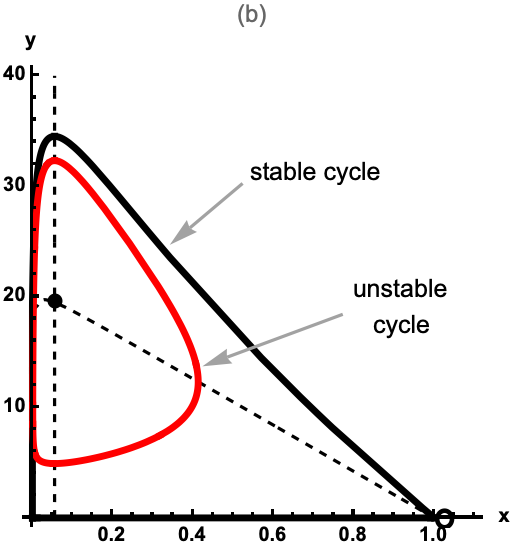}\hspace{0em}
\end{subfigure}
\begin{subfigure}[b]{0.35\textwidth}
\includegraphics[width=\textwidth]{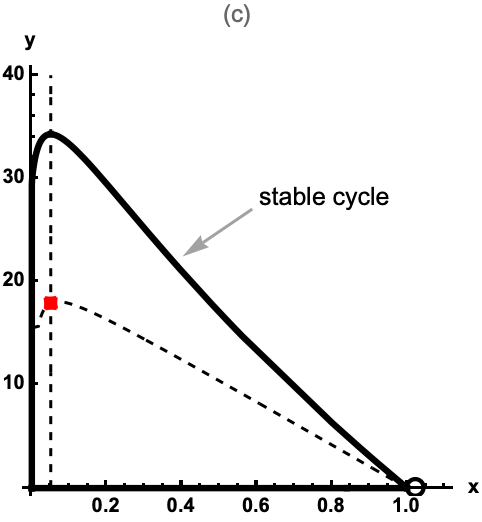}\hspace{0em}
\end{subfigure}
\begin{subfigure}[b]{0.35\textwidth}
\includegraphics[width=\textwidth]{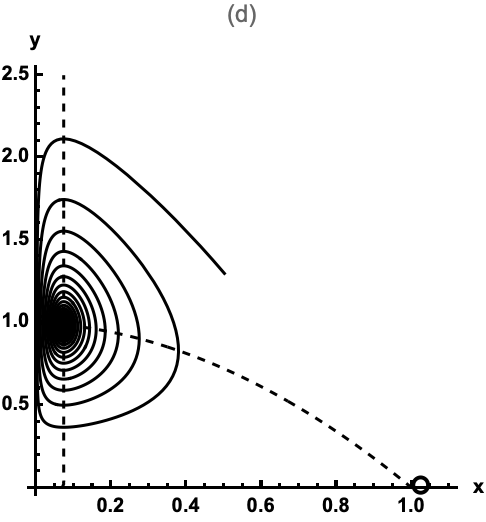}\hspace{0em}
\end{subfigure}
\caption{Typical dynamics of the model (\ref{monoone}) and (\ref{monotwo}) for low values of $c$ with $\rho$ and $\gamma$ as given in (\ref{another1}) and (\ref{another2}) and parameter values in (\ref{setparam1}). The dashed black lines are the prey and the predators isoclines. The black thin continuous lines in (a) and (d) transient orbits. The thick continuous lines in (b) and (c) are the limit cycles, black or red depending on stability. Open circles indicate unstable equilibria, and the black dot a stable equilibrium. (a): $c=0.0000001$; the interior equilibrium is stable; (b): $c=0.00005$; the system undergoes a fold bifurcation of a stable cycle and an unstable one; the interior equilibrium remains stable; (c): $c=0.0003$; the system undergoes a subcritical Hopf bifurcation; the unstable limit cycle shrinks on the interior equilibrium; (d): $c=1.1$; the system undergoes a supercritical Hopf bifurcation. } 
\label{fig:sec32}  
\end{figure}

\begin{figure}[H]
\centering
\begin{subfigure}[b]{0.6\textwidth}
\includegraphics[width=\textwidth]{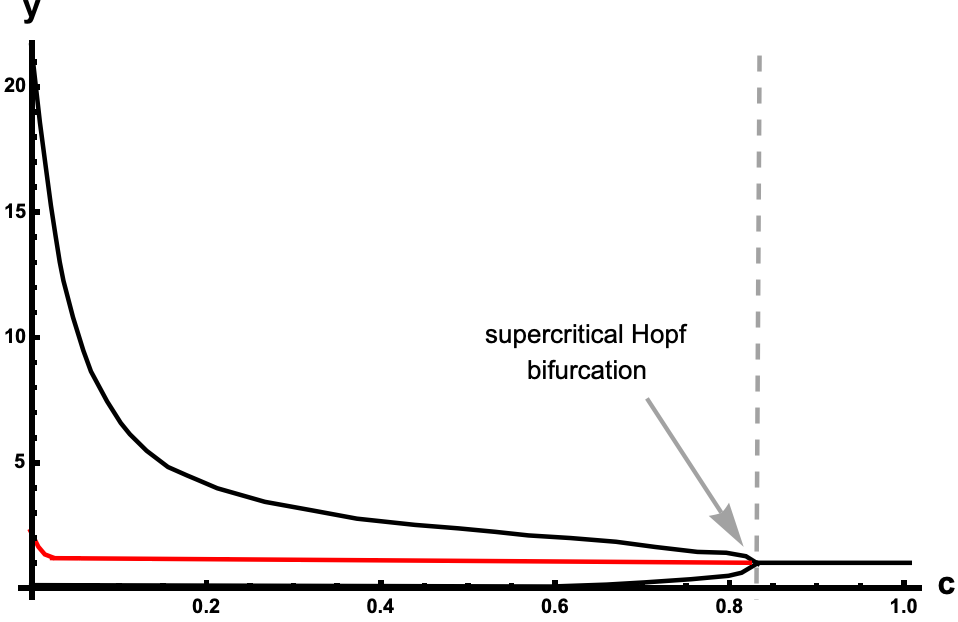}\hspace{0em}
\end{subfigure} 
\begin{subfigure}[b]{0.6\textwidth}
\includegraphics[width=\textwidth]{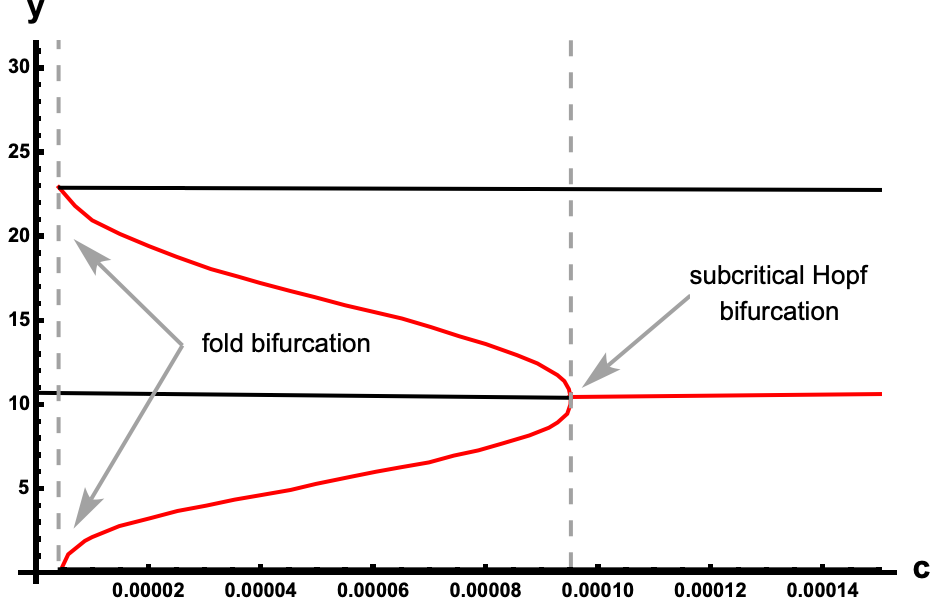}\hspace{0em}
\end{subfigure}
\caption{One-parameter bifurcation plot for the system in (\ref{monoone}) and (\ref{monotwo}) with $\rho$ and $\gamma$ as given in (\ref{another1}) and (\ref{another2}), parameter values in (\ref{setparam1}) and $b=0.05$. In red, the unstable equilibrium and unstable cycle. In black, the stable equilibrium and stable cycle. \emph{Top panel}: $c\in[0,1]$. \emph{Bottom panel}: $c\in[0,0.0002]$.} 
\label{fig:sec33}  
\end{figure}

\begin{figure}[H]
\centering
\begin{subfigure}[b]{0.3\textwidth}
\includegraphics[width=\textwidth]{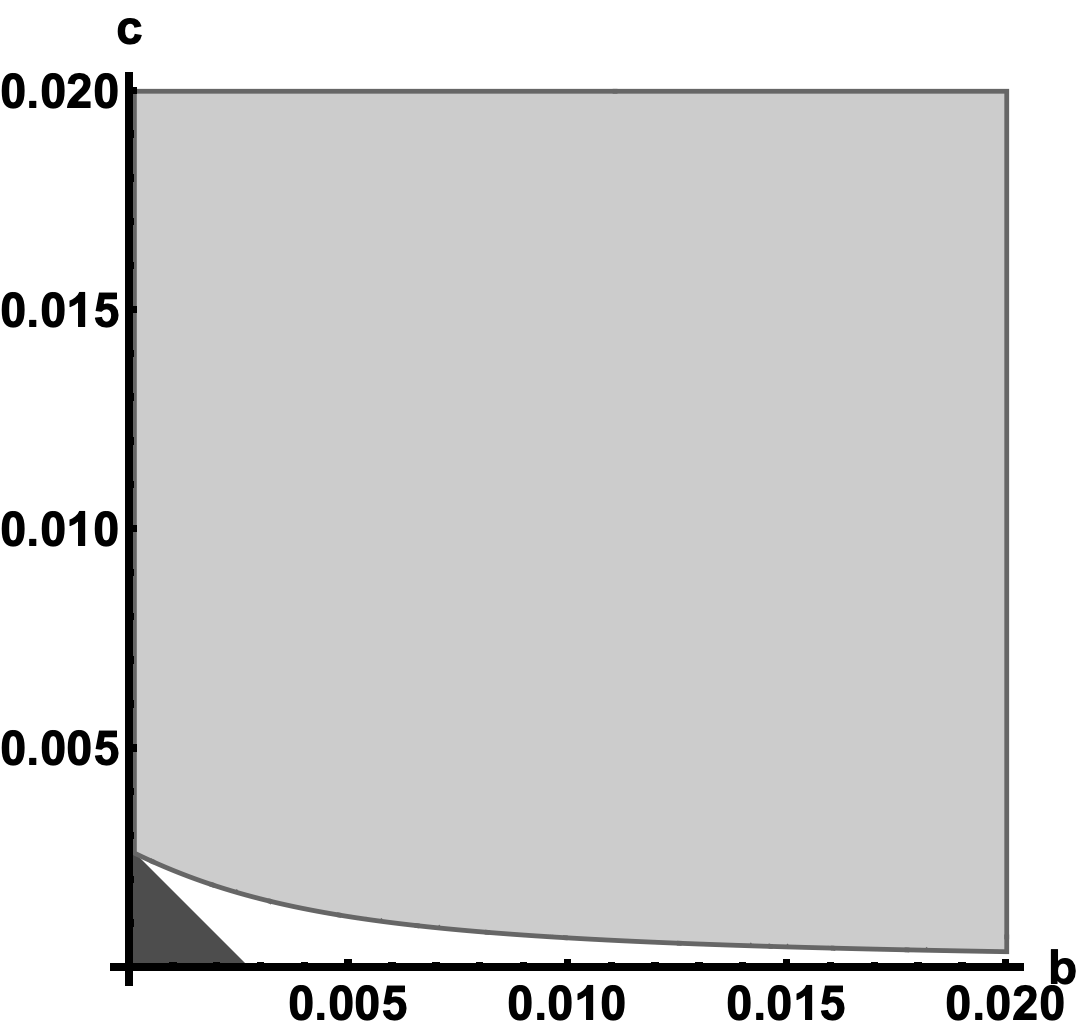}\hspace{0em}
\end{subfigure}
\begin{subfigure}[b]{0.3\textwidth}
\includegraphics[width=\textwidth]{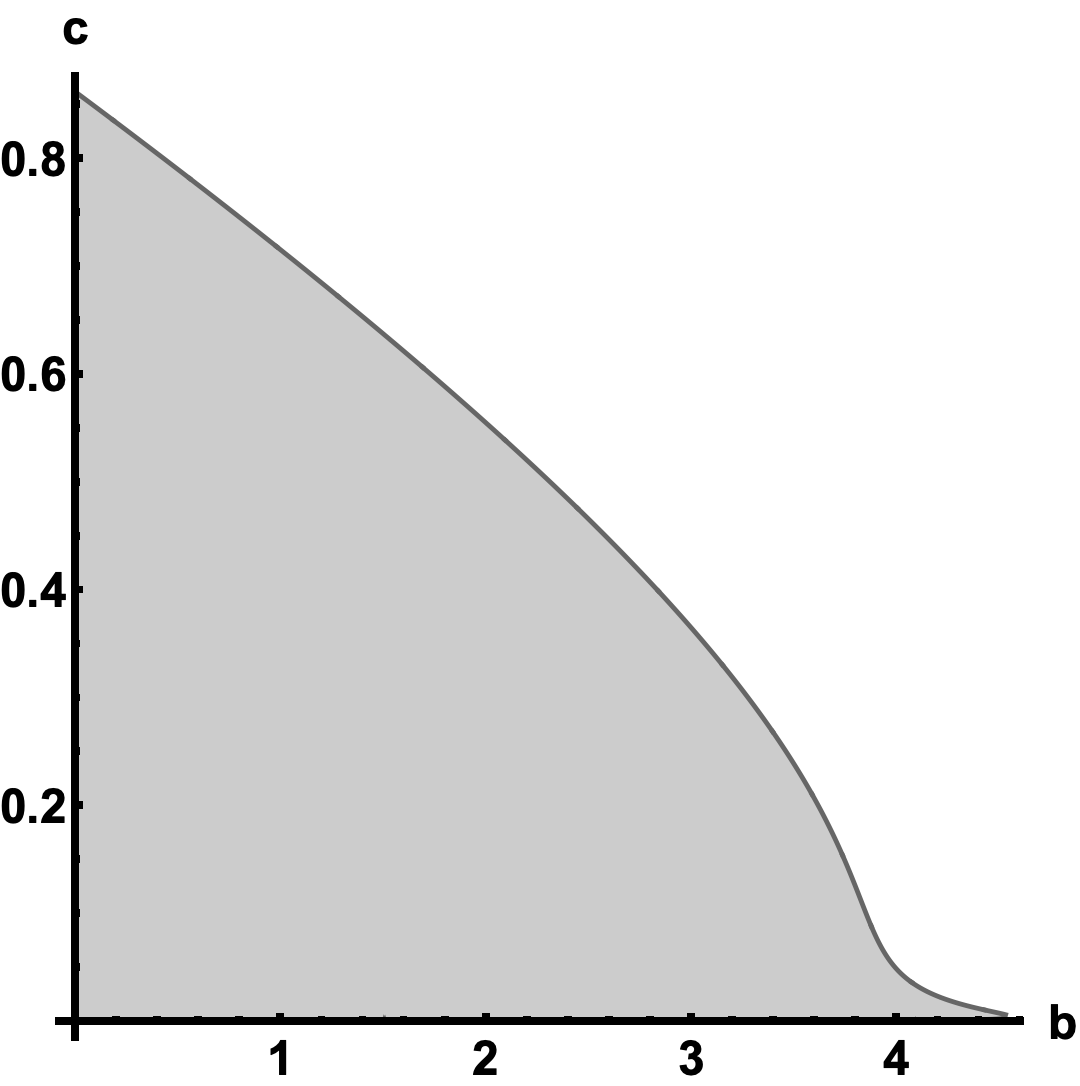}\hspace{0em}
\end{subfigure}
\begin{subfigure}[b]{0.3\textwidth}
\includegraphics[width=\textwidth]{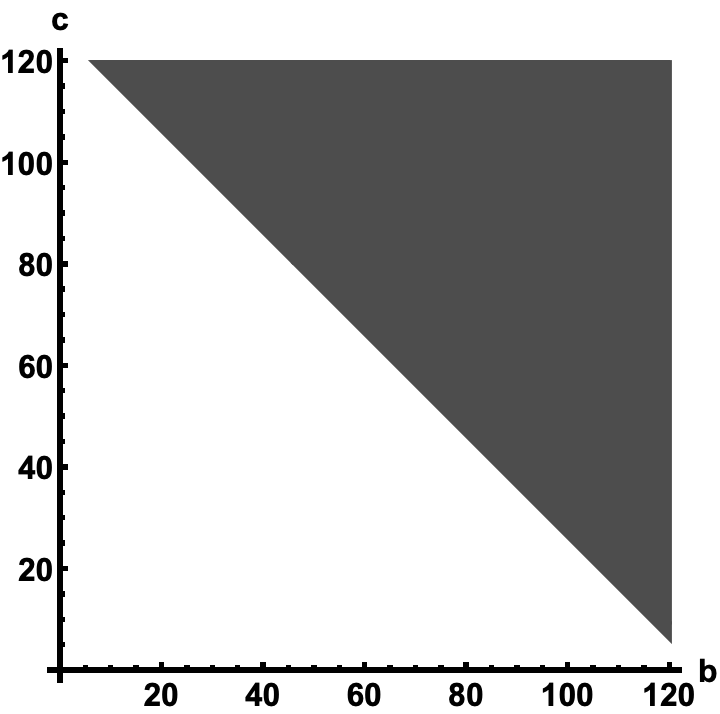}\hspace{0em}
\end{subfigure}
\begin{subfigure}[b]{0.3\textwidth}
\includegraphics[width=\textwidth]{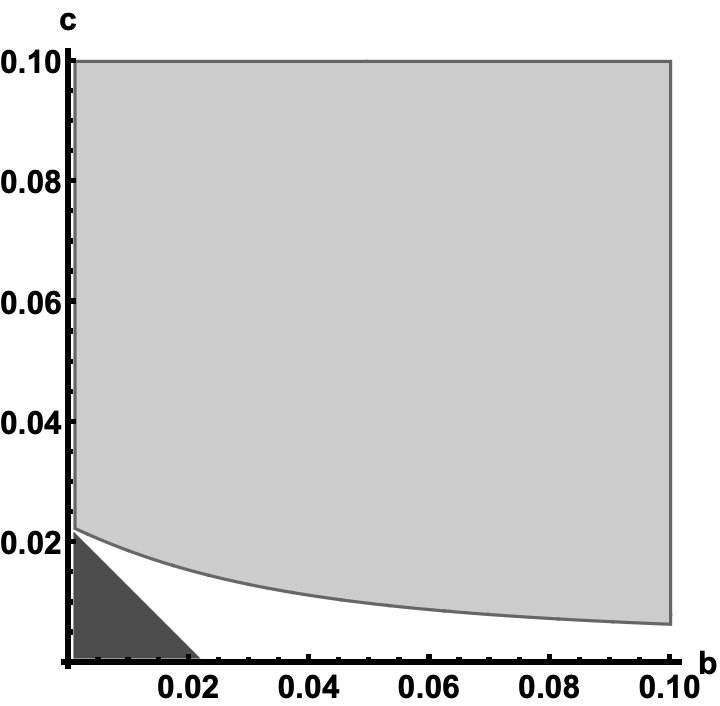}\hspace{0em}
\end{subfigure}
\begin{subfigure}[b]{0.3\textwidth}
\includegraphics[width=\textwidth]{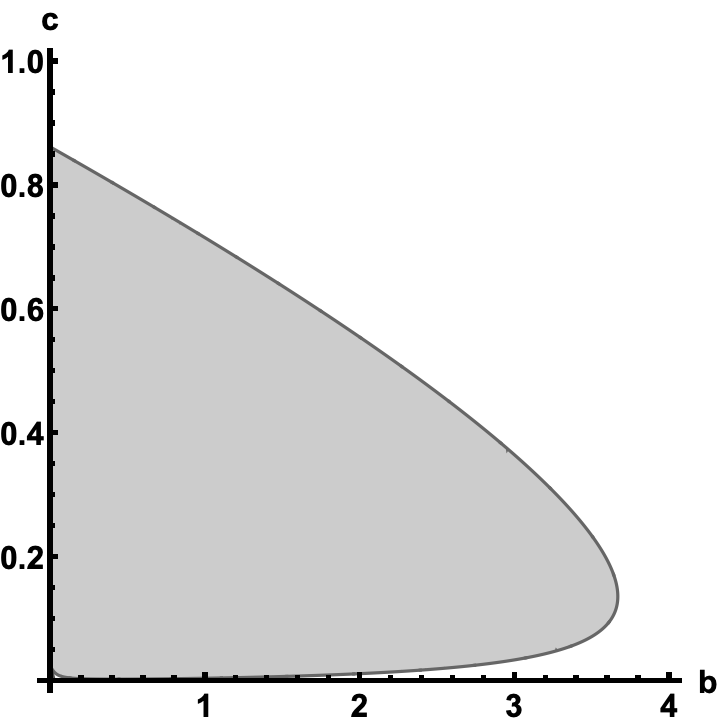}\hspace{0em}
\end{subfigure}
\begin{subfigure}[b]{0.3\textwidth}
\includegraphics[width=\textwidth]{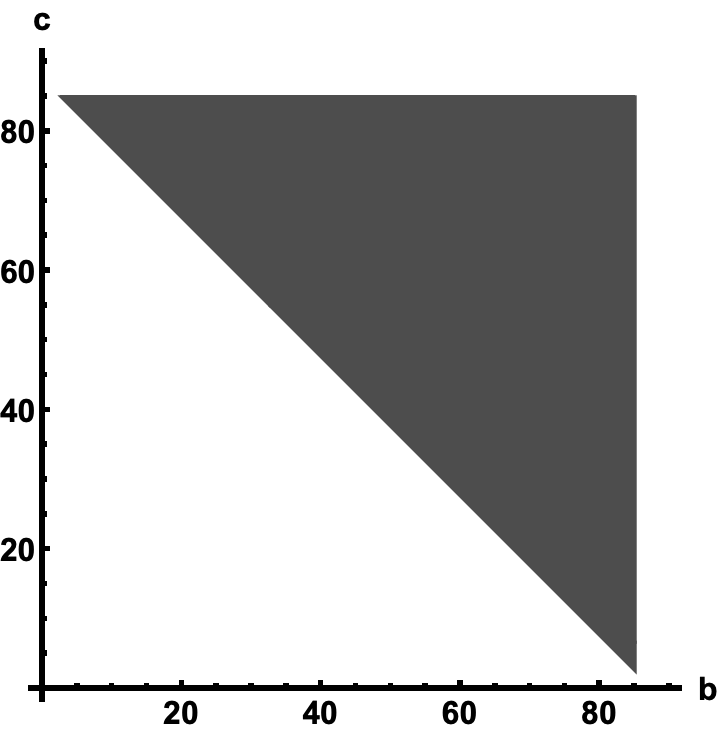}\hspace{0em}
\end{subfigure}
\begin{subfigure}[b]{0.3\textwidth}
\includegraphics[width=\textwidth]{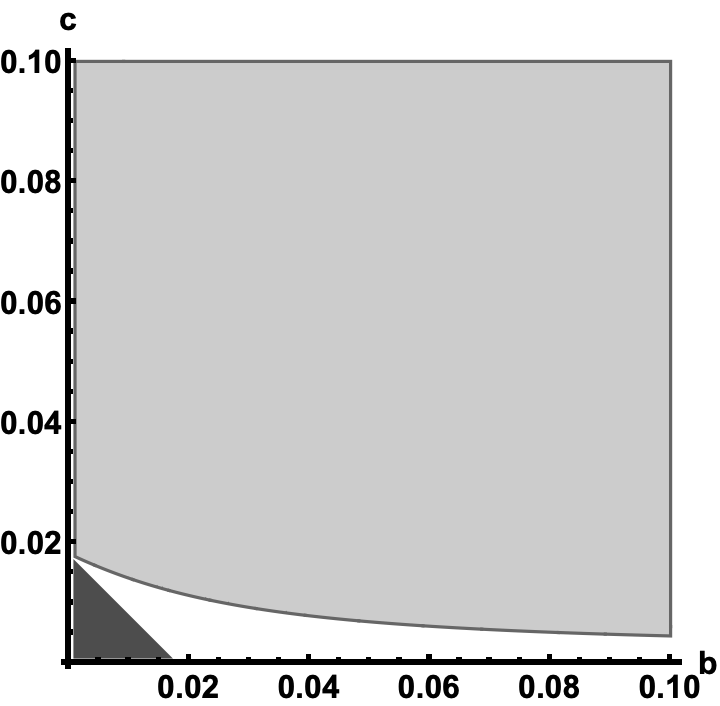}\hspace{0em}
\end{subfigure}
\begin{subfigure}[b]{0.3\textwidth}
\includegraphics[width=\textwidth]{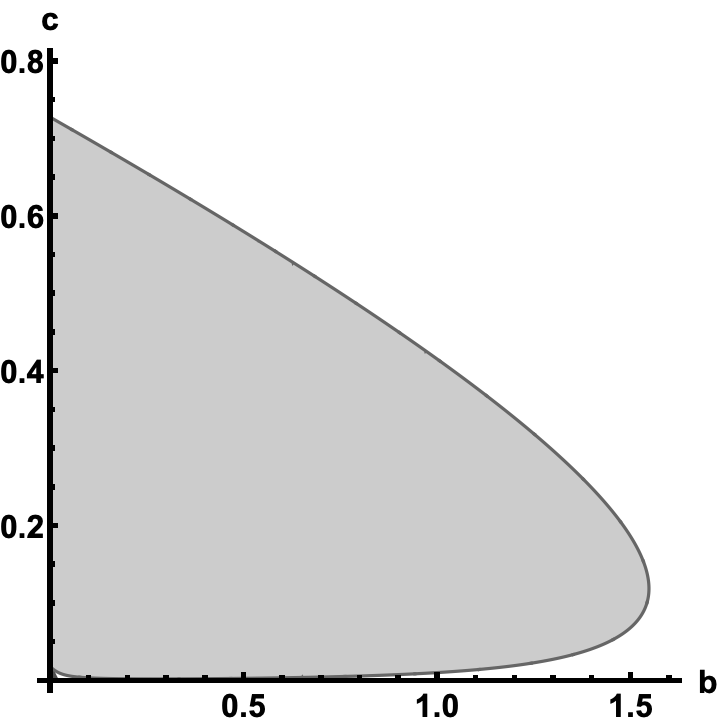}\hspace{0em}
\end{subfigure}
\begin{subfigure}[b]{0.3\textwidth}
\includegraphics[width=\textwidth]{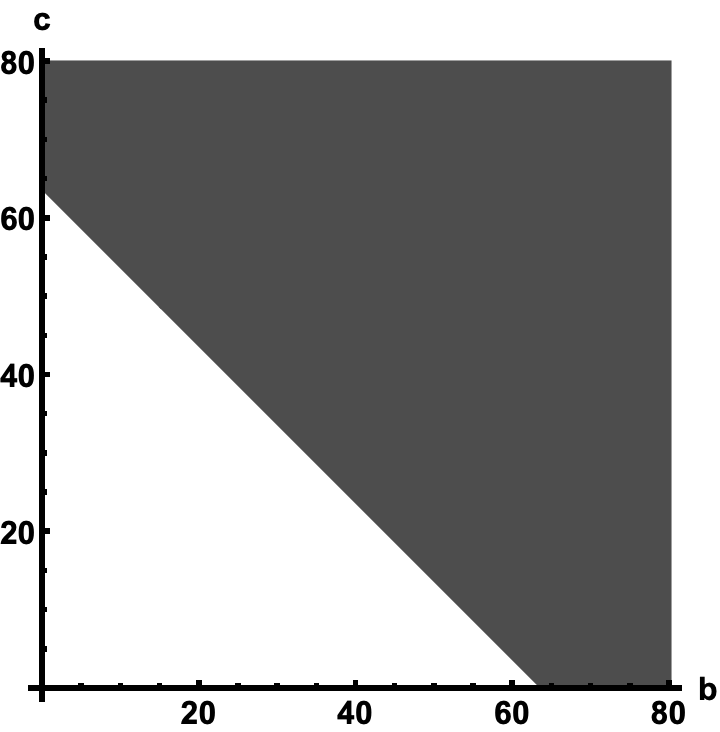}\hspace{0em}
\end{subfigure}
\caption{Two-parameters bifurcation plots for the system in (\ref{monoone}) and (\ref{monotwo}) and for different choices of $\rho$ and $\gamma$: {\em top row} for $\rho$ and $\gamma$ as given in (\ref{another1}) and (\ref{another2}) with (\ref{setparam1}); {\em middle row} for $\rho$ and $\gamma$ as given in (\ref{another12}) and (\ref{another22})  with (\ref{setparam2}); {\em bottom row} for $\rho$ and $\gamma$ as given in (\ref{another22}) and (\ref{another22}) with with (\ref{setparam3}). The columns show different scales: {\em left column} small scale; {\em middle column} intermediate scale; {\em right column} large scale. Different shadings refer to qualitatively different types of population dynamical behaviour: {\em dark grey} predator dies out; {\em light grey} stable limit cycle; {\em white} predator-prey stable equilibrium. Bi-stability of a stable cycle and a stable equilibrium as illustrated in Figure 6 occurs on a scale that is too small to be seen even on the smallest scales in left column. }
\label{fig:sec34}        
\end{figure}

\section{Evolutionary dynamics}\label{sec4}

We investigate the evolution of the density-dependent handling time $\frac{1}{bx+c}$ using the framework of adaptive dynamics. In particular, for the definitions of the \emph{invasion fitness} and \emph{selection gradient}, we refer to the work by \cite{metz1992should}, \cite{metz1995adaptive}, \cite{geritz1997dynamics}, \cite{geritz1998evolutionarily} and \cite{geritz1999evolutionary}. We assume that the evolution of the two-dimensional trait $(b,c)$ happens through a sequence of invasion events and replacements of the resident trait by the mutant one. The small mutation rate allows for the time scale separation between the ecological dynamics and the evolutionary dynamics of the traits, hence we assume that the population reaches an ecological attractor before the next mutant can invade and the resident population is monomorphic most of the time. \\

The ecological feedback environment $E$ is determined by the resident population at the equilibrium and is not affected by the rare mutant. The growth of the rare mutant depends on the ecological environment and if the mutant can invade the resident strategy, but not vice-versa, this will lead to the replacement of the resident trait with the mutant one. By the Tube Theorem (\cite{geritz2002invasion}), we know that if the rare mutant invades the resident dynamics, there will be no change in the resident attractor (the new resident population will \emph{inherit} the resident attractor). However, in case of a catastrophic discontinuous bifurcation (such as a saddle node bifurcation or an homoclinic bifurcation) and multiple attractors for the ecological dynamics, the successful invader can go extinct or switch resident attractor. This scenario cannot be excluded when the system in (\ref{eqeq1}) and (\ref{eqeq2}) has more than one positive and real solution and more than one coexistence equilibrium appear. \\

At the resident dynamics equilibrium, the ecological environment permits one-dimensional representation and is a one-dimensional vector with component the prey population $E=\{x\}$. We define the instantaneous \emph{per capita} growth rate for a predator type $i$ in the resident environment
as
\begin{eqnarray}\nonumber
\frac{1}{y_{i}}\frac{dy_{i}}{dt}&=&\frac{d(\log y_{i})}{dt}=\\\nonumber
&=& \int_0^\infty \rho(\tau) e^{-(b_{i} E+c_{i})\tau} d\tau \frac{\beta E }{1+\frac{1}{(b_{i}E+c_{i})}\beta E}-\delta=\\\label{insta}
&=& \gamma(h_{i}(E)) f(E,h_{i}(E))-\delta
\end{eqnarray}
The concentrations of different resources (i.e. prey densities) or other variables, such as the predator densities, are the main factors behind species diversity (see  \cite{macarthur1964competition}, \cite{tilman1982resource}, \cite{geritz1997dynamics}), and by the principle of competitive exclusion it follows that the dimensionality of the environment sets a theoretical upper limit to the number of phenotypes which can possibly coexist when the resident population attains its ecological equilibrium. Therefore, the coexistence of multiple resident types is excluded in case of convergence to the interior equilibrium. Under different conditions, such as convergence to the stable limit cycle, the coexistence of multiple consumers utilising the same resource could occur. Therefore, throughout the paper the focus is on the evolutionary dynamics in case of convergence to the resident periodic attractor. \\

When we introduce a mutant strategy $(b_{m},c_{m})$, the resident-invader dynamics is given by the following equations 
\begin{eqnarray}
\frac{dx}{dt}&=&r x\left( 1- \frac{x}{k}\right) - \sum_{i} \frac{\beta xy_i}{1+\frac{1}{(b_ix+c_i)}\beta x}\\
\frac{dy_i}{dt}&=&\int_0^\infty \rho(\tau) e^{-(b_i x+c_i)\tau} d\tau \frac{\beta xy_i}{1+\frac{1}{(b_ix+c_i)}\beta x}-\delta y_i\\
\frac{dy_{m}}{dt}&=&\int_0^\infty \rho(\tau) e^{-(b_{m} x+c_{m})\tau} d\tau \frac{\beta xy_{m}}{1+\frac{1}{(b_{m}x+c_{m})}\beta x}-\delta y_{m}.
\end{eqnarray}

A mutant strategy can invade the resident dynamics if it has positive \emph{invasion fitness} (note that, however, this will not always happen due to demographic stochasticity). The invasion fitness is defined by the exponential growth rate of the mutant population in the established resident population while the mutant is still rare. In the environment $E$ set by a single resident type with strategy $(b,c)$, the invasion fitness of the mutant population is given as the long-term average population growth rate

\begin{equation}\label{fit}
s_{(b,c)}(b_{m},c_{m})=\lim_{t\rightarrow\infty}\frac{1}{t}\int_0^t \left[ \gamma(h_{b_m,c_m}(E)) f(E,h_{b_m,c_m}(E))-\delta \right] dt.
\end{equation}
By definition, a fundamental property is that the invasion fitness of a resident type with strategy $(b,c)$ satisfies $s_{(b,c)}(b,c)=0$ at the demographic attractor. \\

The trait $(b,c)$ evolves in the direction of the local \emph{selection gradient}, i.e. the fitness derivative. Since the strategies are two-dimensional, the selection gradient is given by the two-dimensional vector with components the gradients with respect to $b$ and $c$

\begin{equation}\label{grad}
D(b,c)=\left(\frac{\partial s_{(b,c)}(b_{m},c_{m})}{\partial b_{m}}\Big|_{(b_{m},c_{m})=(b,c)}, \frac{\partial s_{(b,c)}(b_{m},c_{m})}{\partial c_{m}}\Big|_{(b_{m},c_{m})=(b,c)}\right).
\end{equation}
The selection gradient vanishes in the neighbourhood of an \emph{evolutionary singularity}. A singular strategy $(b,c)$ that no mutant trait can invade is \emph{evolutionary stable} (ESS, see the definition by \cite{smith1982evolution} in evolutionary game theory) and satisfies $s_{(b,c)}(b_{m},c_{m})<0$ for all pairs of trait values $(b_{m},c_{m})$ different from $(b,c)$. Furthermore, we define \emph{convergence stable} a singular strategy $(b,c)$ such that a mutant with strategy even closer to the singularity than the resident one can invade the resident dynamics. A singular strategy can be ESS and convergence stable, however if $(b,c)$ is convergence stable but not ESS, the system undergoes \emph{evolutionary branching} and two divergent strategies can coexist.\\

When the resident population is settled on the stable coexistence equilibrium $E_2=(x_2,y_2)$, the invasion fitness for the mutant strategy becomes
\begin{eqnarray}\nonumber
s_{(b,c)}(b_{m},c_{m})&=&\frac{x_2}{1+\frac{x_2}{b_mx_2+c_m}}\int_0^\infty \rho(\tau) e^{-(b_mx_2+c_m)\tau}d\tau-\delta=\\\label{fitteq}
&=&\gamma(h_{b_m,c_m}(x_2)) f(x_2,h_{b_m,c_m}(x_2))-\delta.
\end{eqnarray}
In this case, the selection gradients for the parameters $b$ and $c$ differ only by the factor $x_2$ and have same sign:
\begin{eqnarray}\nonumber
D(b,c)&=&\frac{\partial s_{E}(b_{m},c_{m})}{\partial c_{m}}\Big|_{(b_{m},c_{m})=(b,c)} \begin{pmatrix}
x_2\\
1
\end{pmatrix}=\\\nonumber
&=& \left[ \frac{x_2^2}{(x_2+bx_2+c)^2}\int_0^\infty \rho(\tau) e^{-(bx_2 +c)\tau} d\tau \right. \\\nonumber
&& \left. - \frac{x_2(b x_2+c)}{x_2+bx_2+c} \int_0^\infty \tau\rho(\tau) e^{-(bx_2+c)\tau}d\tau \right] \begin{pmatrix}
x_2\\
1
\end{pmatrix}.\\\label{gradeq}
\end{eqnarray}

Given the non-monotonicity of the fitness $s_{(b,c)}(b_{m},c_{m})$ in (\ref{fitteq}), the presence or absence of an optimisation (or pessimisation) principle must be verified (see \cite{metz2008does}). In Appendix \ref{app3}, we show that in case of constant conversion factor $\gamma(h)=\gamma_0$, a pessimisation principle is always verified and evolution minimises the prey density when the resident dynamics converges to the interior equilibrium. \\

On the other hand, when the conversion factor is a non-decreasing function of the handling time, the presence of a pessimisation principle is no longer trivial.
In Appendix \ref{app3} we suggest a graphical mean for checking non-monotonicity of $s_{(b,c)}(b_{m},c_{m})$ in case of non-constant $\gamma(h)$. We use the one-to-one relation between $x$ and $h_{b,c}$ such that if $s_{(b,c)}(b_{m},c_{m})$ is uniformly monotonic in $h_{b_m,c_m}$, so does with respect to $x$. Therefore, we formulate the problem in terms of $h_{b,c}$ and construct critical functions $\gamma_k(h_{b,c})$ with $k>0$ for the conversion factor $\gamma(h_{b,c})$. Sufficient and necessary conditions for non-uniformly monotonic fitness function are summarised in the following lemma:

\begin{lem}\label{lemmauno}
If there exist $k>0$ and $h_{b,c}\in\left(0, \frac{1}{c} \right)$ such that
\begin{enumerate}[label=(\roman*)]
\item\label{lemconduno} $\gamma (h_{b,c})=\gamma_k(h_{b,c}) $
\item \label{lemconddue} $\gamma '(h_{b,c})=\gamma_k'(h_{b,c})$
\item \label{lemcondtre} there exists $\varepsilon>0$ such that $\gamma''(h_0)$ does not change sign for every $h_0\in \left(h_{b,c}-\varepsilon, h_{b,c}+\varepsilon \right)$  
\end{enumerate}
Then the derivative $\frac{ds(h_{b,c})}{dh_{b,c}}$ changes sign at $h_{b,c}$.
\end{lem}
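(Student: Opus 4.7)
The plan is to reduce the sign-change question to an explicit calculation of $s'$ and $s''$ at the resident equilibrium, showing that conditions (i) and (ii) force the first derivative to vanish at $h_{b,c}$ while condition (iii) pins down a definite sign for the second derivative locally.

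First I would observe that the mutant fitness at the resident equilibrium given in (\ref{fitteq}) is $s(h)=\gamma(h)f(x_2,h)-\delta$ with $f(x_2,h)=\beta x_2/(1+\beta h x_2)$, so in particular $\partial_h f(x_2,h)=-f(x_2,h)^2$. A direct differentiation then gives
\begin{equation*}
s'(h) \;=\; f(x_2,h)\,\bigl[\gamma'(h) - \gamma(h)\,f(x_2,h)\bigr].
\end{equation*}
The critical family $\gamma_k$ from Appendix \ref{app3} should consist of those conversion factors for which the bracket vanishes identically; they therefore solve the separable ODE $\gamma_k'/\gamma_k = f(x_2,\cdot)$, whose solutions are affine in $h$ and parametrised by $k>0$. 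Conditions (i) and (ii) say that $\gamma$ and $\gamma_k$ share value and slope at $h_{b,c}$, so the bracket at that point agrees with the (vanishing) bracket computed from $\gamma_k$. Hence $s'(h_{b,c})=0$.

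Next I would differentiate once more. A routine computation using $\partial_h f=-f^2$ yields
\begin{equation*}
s''(h) \;=\; f(x_2,h)\,\bigl[\gamma''(h) - 2\,\gamma'(h)\,f(x_2,h) + 2\,\gamma(h)\,f(x_2,h)^2\bigr].
\end{equation*}
Applying the first-order identity $\gamma'(h_{b,c}) = \gamma(h_{b,c})\,f(x_2,h_{b,c})$ coming from (i)--(ii), the bracket at $h_{b,c}$ collapses to $\gamma''(h_{b,c})$, giving $s''(h_{b,c}) = f(x_2,h_{b,c})\,\gamma''(h_{b,c})$. By condition (iii) and continuity, the bracket keeps the same definite sign as $\gamma''(h_{b,c})$ in a neighbourhood of $h_{b,c}$, so $s''$ has a definite sign locally; consequently $s'$ is strictly monotonic near $h_{b,c}$, and since it vanishes there it changes sign.

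The hard part will be the borderline case in which $\gamma''(h_{b,c})$ itself is zero, which a literal reading of (iii) admits (``does not change sign'' is compatible with $\gamma''\equiv 0$ on an interval). In that sub-case $\gamma$ would coincide with the affine critical function $\gamma_k$ on a whole neighbourhood, $s'\equiv 0$ there, and no strict sign change would occur. Ruling this out will either require strengthening the reading of (iii) to definite \emph{nonzero} sign, or running a higher-order tangency argument comparing $\gamma$ and $\gamma_k$ through their Taylor expansions at $h_{b,c}$. A secondary technical task is verifying that the family constructed in Appendix \ref{app3} really is the ODE solution family above and that $k>0$ can be chosen to match (i)--(ii) for every $h_{b,c}\in(0,1/c)$.
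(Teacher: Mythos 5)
Your overall strategy---tangency to a one-parameter critical family forces $s'=0$ at the tangency point, and a second-order condition then forces a sign change---is exactly the paper's, but you have differentiated the wrong function, and consequently your critical family is not the one the lemma refers to. The lemma sits inside the discussion of whether an optimisation/pessimisation principle exists, so the relevant question is whether $s$ is monotone in the \emph{environmental} variable $x$, uniformly over mutant traits. The paper reparametrises $x$ by $h=h_{b_m,c_m}(x)=1/(b_m x+c_m)$ and substitutes $x=(1-c_m h)/(b_m h)$ \emph{into the functional response as well}, so that the object being differentiated is $s(h)=\gamma(h)\,f\bigl((1-c_m h)/(b_m h),\,h\bigr)-\delta$ along the curve traced out as the environment varies. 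This is why $h$ is restricted to $\left(0,\tfrac{1}{c}\right)$ (the image of $x\in(0,\infty)$ under $h_{b_m,c_m}$), and why the critical functions $\gamma_k=k/f$ come out as $\gamma_k(h)=kh(c_m h-1-b_m)/(c_m h-1)$, vanishing at $h=0$ with a vertical asymptote at $h=1/c_m$. You instead fix the resident equilibrium $x_2$ and vary the mutant's handling time inside $f(x_2,h)=\beta x_2/(1+\beta h x_2)$, which yields the affine family $k(1+\beta x_2 h)$. Your closing worry that the Appendix~\ref{app3} family might not be your ODE solution family is well founded: it is not, and the mismatch is not cosmetic---you are testing monotonicity of fitness in the mutant trait at a fixed environment, which is a different statement from the one the lemma makes and does not address the existence of a pessimisation principle.

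Two of your observations do survive transfer to the correct setting and are worth keeping. First, the degenerate case you flag is real: if $\gamma$ osculates the critical curve to all orders (for instance coincides with some $\gamma_k$ on an interval), then $s'\equiv 0$ there and no strict sign change occurs, so condition \ref{lemcondtre} must be read as excluding this. Second, and more importantly, if you redo your $s''$ computation with the correct $f(h)$ you find that at the tangency point $s''=f\,(\gamma''-\gamma_k'')$, not $f\,\gamma''$: the sign of $s''$ is controlled by whether $\gamma$ is more convex or more concave than the critical function, which is what the prose of Appendix~\ref{app3} actually demands, whereas condition \ref{lemcondtre} as literally written only constrains $\gamma''$ itself. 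In your affine-family setting $\gamma_k''\equiv 0$ and the two conditions coincide, which is presumably why your version looked clean; with the paper's curved critical family they do not, so your argument as written neither proves the lemma as the paper intends it nor exposes this remaining gap in its hypothesis.
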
 

Hence when Lemma \ref{lemmauno} is verified, an optimisation (or pessimisation) principle applies only locally and not uniformly. 
On the other hand, in a cycling population, the question is no longer an optimisation problem, but rather a frequency-dependent problem. When the system settles on a stable limit cycle, the invasion fitness of a mutant trait $(b_m,c_m)$ in the periodic environment generated by the resident strategy $(b,c)$ is given by the average instantaneous growth rate over the length of the cycle with period $t_{b,c}$:

\begin{equation}\label{fitcycle}
s_{(b,c)}(b_m,c_m)=\frac{1}{t_{b,c}}\int_0^{t_{b,c}}\left[\frac{x_{b,c}(t)}{1+\frac{x_{b,c}(t)}{b_m x_{b,c}(t)+c_m}}\int_0^\infty \rho(\tau) e^{-(b_m x_{b,c}(t)+c_m)\tau}d\tau-\delta \right]dt.
\end{equation}

In the same way, we define the selection gradient for the strategy $(b,c)$ in the periodic resident environment as

\begin{eqnarray}\nonumber
D(b,c)&=&\frac{1}{t_{b,c}}\int_0^{t_{b,c}} \left[ \frac{x_{b,c}(t)^2}{(x_{b,c}(t)+b x_{b,c}(t)+c)^2}\int_0^\infty \rho(\tau) e^{-(b x_{b,c}(t)+c)\tau} d\tau \right. \\ \nonumber
&&- \left. \frac{x_{b,c}(t)(b x_{b,c}(t)+c)}{x_{b,c}(t)+b x_{b,c}(t)+c} \int_0^\infty \tau \rho(\tau) e^{-(b x_{b,c}(t)+c)\tau} d\tau \right] \begin{pmatrix}
x_{b,c}(t)\\
1
\end{pmatrix} dt.\\\label{gradcycle}
\end{eqnarray}
Note that when the resident trait attains the stable limit cycle, the selection gradients for the parameter $b$ and $c$ no longer have same sign. Moreover, when the resident attractor is periodic, the dynamics can be studied only numerically.\\

We run numerical simulations with the software \emph{Mathematica}$^\circledR$ and apply the change of variables in Appendix \ref{app6} in case of one-prey-one-predator resident dynamics. We numerically integrate the equations with the command \emph{NDSolve} and use either the \emph{Event locator} method or the \emph{WhenEvent} controller and the Poincar\'e section to evaluate the convergence of the solution. \\

As a first step, we check the evolutionary dynamics for the one-dimensional predator trait $c=\frac{1}{h}$, corresponding to the fixed handling time by \cite{geritz2007evolutionary}, for $\rho$ and $\gamma$ as given in (\ref{another1}) and (\ref{another2}) and parameter values in (\ref{setparam1}), when $b=0$ and the resident population is at the periodic attractor.
In Figure \ref{fig:sec41}, we give the \emph{mutual invasibility plot} (MIP) for the predator trait $c$, that is we compute the sign of the invasion fitnesses $s_{(0,c_1)}(0,c_2)$ and $s_{(0,c_2)}(0,c_1)$ (corresponding to the signs in each area of the plot, respectively). 
We find the branching point $c=0.625$ (convergence stable but not evolutionary stable singularity, corresponding to $h=\frac{1}{c}=1.6$ in \cite{geritz2007evolutionary}), an evolutionary repeller at $c\approx 0.31$ (i.e. evolution leads away from this singular strategy) and an evolutionary attractor at $c\approx 0.06$ (ESS).\\

The area marked with $++$ in the MIP, where $s_{(0,c_1)}(0,c_2)>0$ and $s_{(0,c_2)}(0,c_1)>0$ and the strategies $c_1$ and $c_2$ are mutually invasible, corresponds to the set of possible coexisting strategies and we use evolutionary phase plane analysis to check the existence of dimorphic singularities and their stability. 
In particular, in the $++$ region above the main diagonal of the \emph{trait evolution plot} (TEP) in Figure \ref{fig:sec41}, we give the isoclines for the evolutionary dynamics in a dimorphic resident population with strategies $c_1$ and $c_2$, where the local fitness gradients 

\begin{eqnarray}
D_{1}(0,c_1,0,c_2)&=&\frac{\partial s_{(0,c_1,0,c2)}(b_m,c_m)}{\partial c_m}\Big|_{(b_m,c_m)=(0,c_1)},\\
D_{2}(0,c_1,0,c_2)&=&\frac{\partial s_{(0,c_1,0,c2)}(b_m,c_m)}{\partial c_m}\Big|_{(b_m,c_m)=(0,c_2)}
\end{eqnarray}
change sign and indicate by arrows the direction of evolution. In the area of coexistence above the main diagonal, the $c_1$-isocline connects to the $c_2$-extinction boundary with $s_{(0,c_1)}(0,c_2)=0$ at any point vertically above a singular strategy, while the $c_2$-isocline connects at any point of intersection with the vertical tangent (see the Appendix by \cite{geritz1999evolutionary} for more details). Note that for very small values of $c_1$ ($c_1\lessapprox 0.002$), we lose track of the evolutionary dynamics in the numerical simulations as it is too difficult to approximate the large cycles in the dimorphic resident population and then compute the mutant fitness gradient, thus we cannot find the $c_2$-isocline connection with the evolutionary boundary on the left hand side.
By symmetry, the same isoclines appear in the $++$ region below the main diagonal. \\

We observe that no evolutionary singular coalition occurs, but cycles of evolutionary branching and extinction are likely to happen. Let us start with a monomorphic population with strategy above the repeller, evolution first leads to $c=0.625$ where the system undergoes evolutionary branching. In the area of protected dimorphisms, the direction of evolution is up and to the left below the $c_2$-isocline and down and to the left above the $c_2$-isocline. Therefore, evolution follows the $c_2$-isocline up till the intersection with the extinction boundary, where the larger strategy of the dimorphism goes extinct and the remaining monomorphic population is below the repeller and thus moves to the ESS.\\

Different is the case with $\rho$ and $\gamma$ as given in (\ref{another12}) and (\ref{another22}). For both sets of parameter values in (\ref{setparam2}) and (\ref{setparam3}), the system evolves to a dimorphic coalition after branching (see Figure \ref{fig:sec42}, \emph{top panels}), $(c_1,c_2)=(0.271596,1.375601)$ and $(c_1,c_2)=(0.222222,0.666667)$ (corresponding to the evolutionary stable coexistence by \cite{geritz2007evolutionary}), respectively. From the TEPs it can be seen that the dimorphic singularity can be reached from most of the dimorphisms in the invasion cone, which is locally forward invariant, and therefore the dimorphic singularity is totally convergence stable. As the singularity represents also a combination of fitness maxima, the coalition is also evolutionary stable (see Figure \ref{fig:sec42}, \emph{bottom panels}).

\begin{figure}[H]
\centering
\begin{subfigure}[b]{.45\linewidth}
\includegraphics[width=6cm, height=6cm]{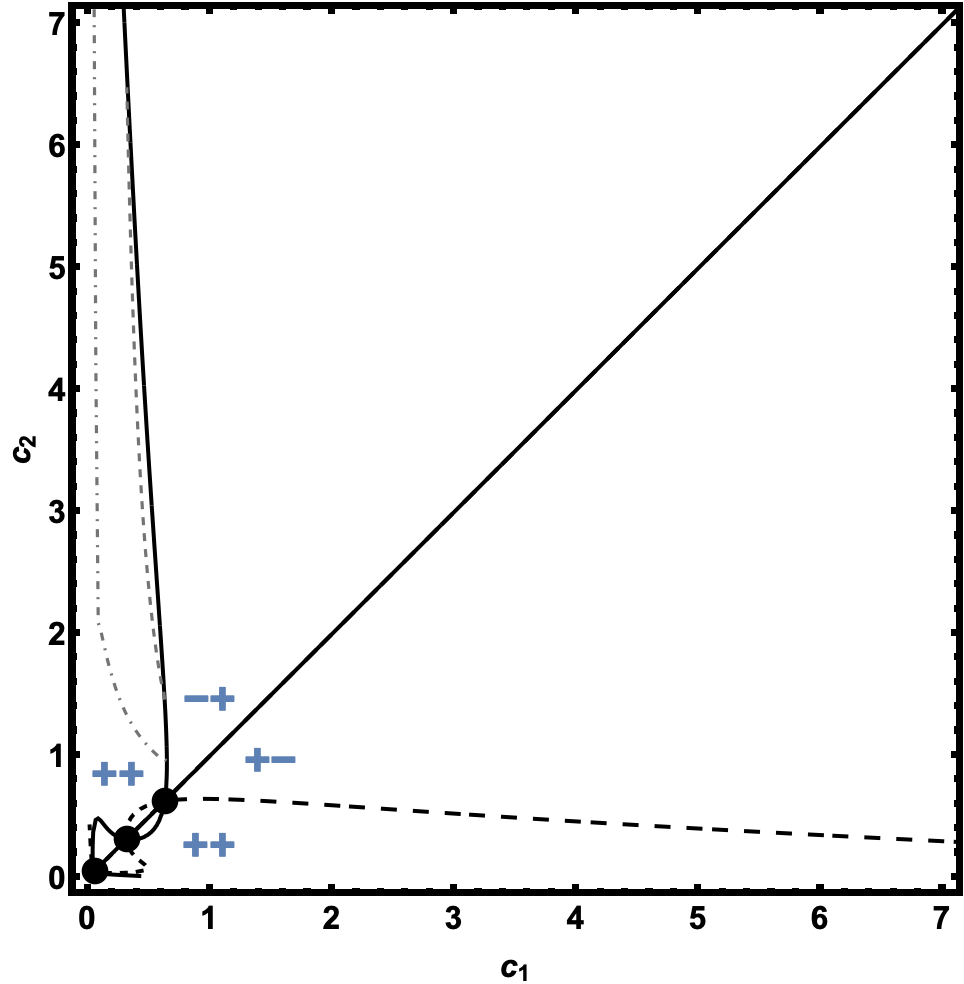}
\end{subfigure}
\begin{subfigure}[b]{.45\linewidth}
\includegraphics[width=6cm, height=6cm]{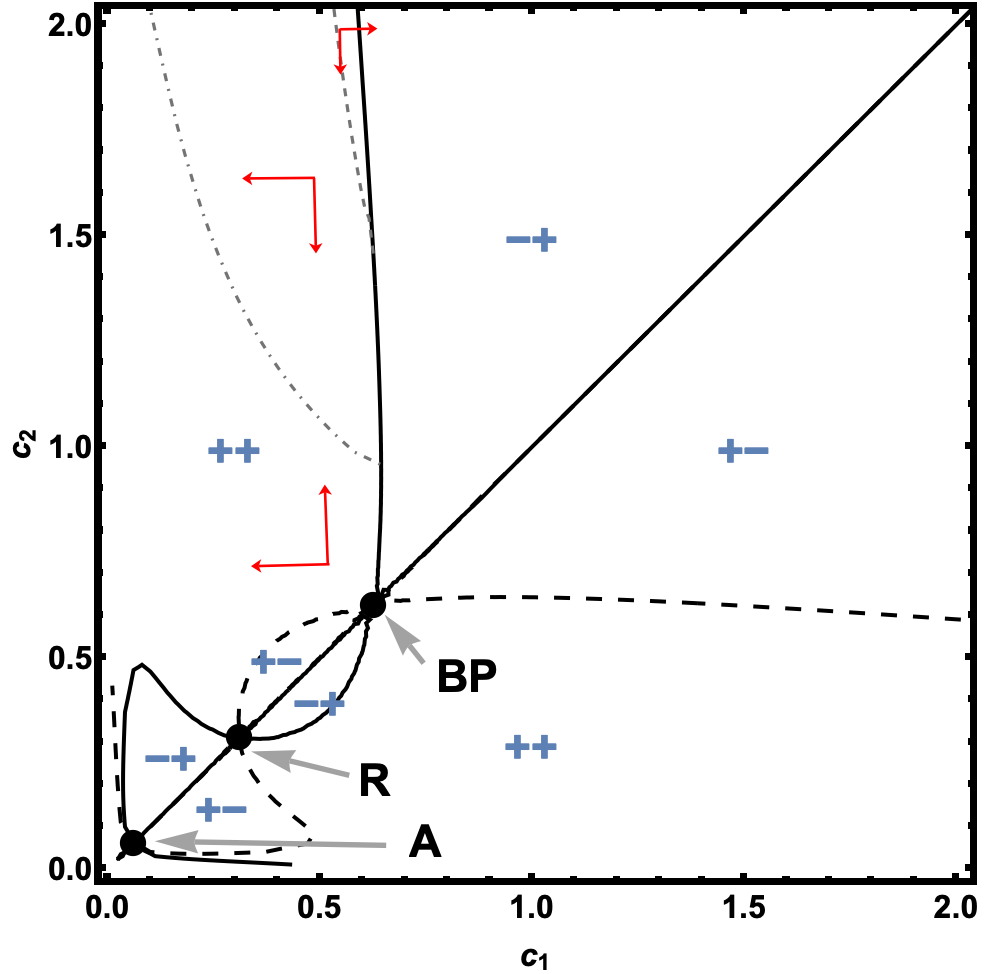}
\end{subfigure}
\caption{\emph{Left panel}: MIP for evolving $c$ and fixed $b=0$, for the system in (\ref{monoone}) and (\ref{monotwo}) and $\rho$ and $\gamma$ as given in (\ref{another1}) and (\ref{another2}) with parameter values in (\ref{setparam1}). \emph{Right panel}: corresponding TEP. Black continuous line: $c_2$-extinction boundary. Black dotted line: $c_1$-extinction boundary. BP: branching point. R: repelling singularity. A: attracting singularity. $++$: $s_{(0,c_1)}(0,c_2)>0$, $s_{(0,c_2)}(0,c_1)>0$, i.e. mutually invasible strategies.  $+-$: $s_{(0,c_1)}(0,c_2)>0$, $s_{(0,c_2)}(0,c_1)<0$, i.e. invasion and substitution. $-+$: $s_{(0,c_1)}(0,c_2)<0$, $s_{(0,c_2)}(0,c_1)>0$, i.e. the mutant strategy cannot invade. Grey dashed line: $c_1$-isocline. Grey dash-dotted line: $c_2$-isocline. Red arrows: evolutionary dynamics.}
\label{fig:sec41}        
\end{figure}

\begin{figure}[H]
\centering
\begin{subfigure}[b]{.45\linewidth}
\includegraphics[width=6cm, height=6cm]{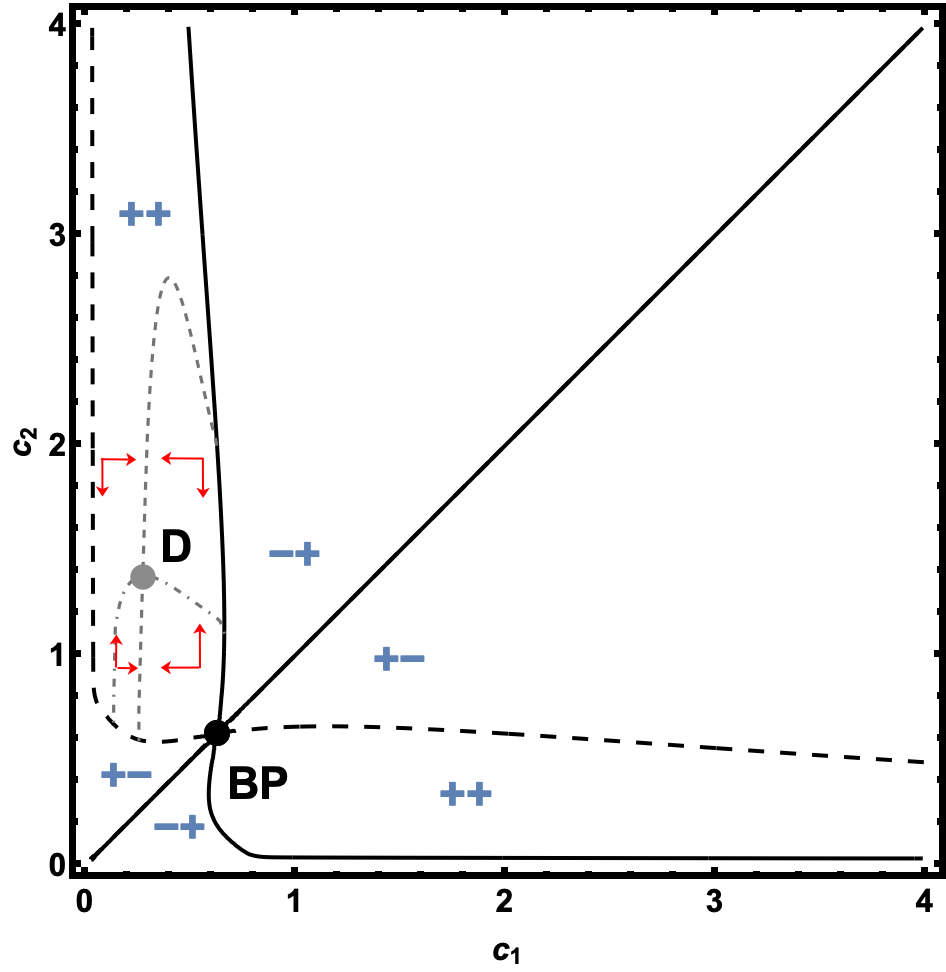}
\end{subfigure}
\begin{subfigure}[b]{.45\linewidth}
\includegraphics[width=6cm, height=6cm]{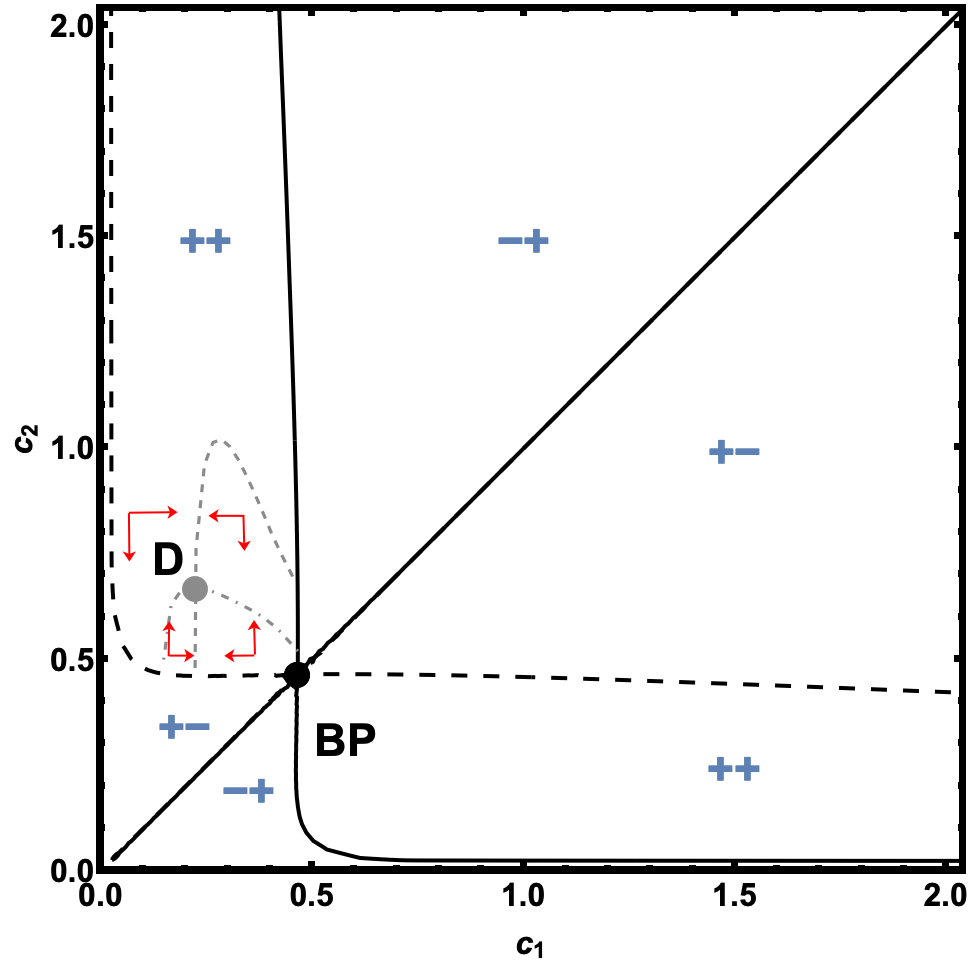}
\end{subfigure}
\begin{subfigure}[b]{.45\linewidth}
\includegraphics[width=5.5cm, height=3cm]{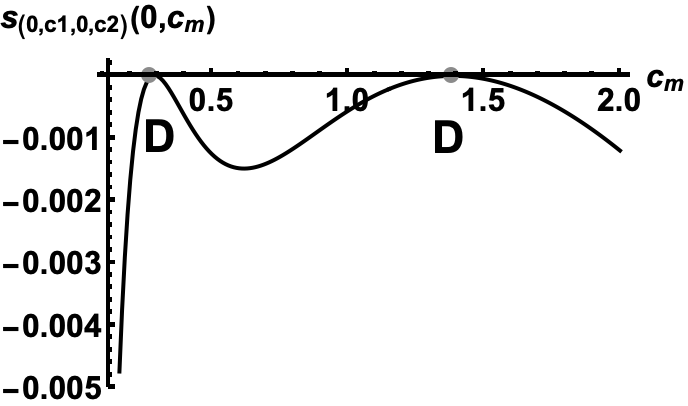}
\end{subfigure}
\begin{subfigure}[b]{.45\linewidth}
\includegraphics[width=5.5cm, height=3cm]{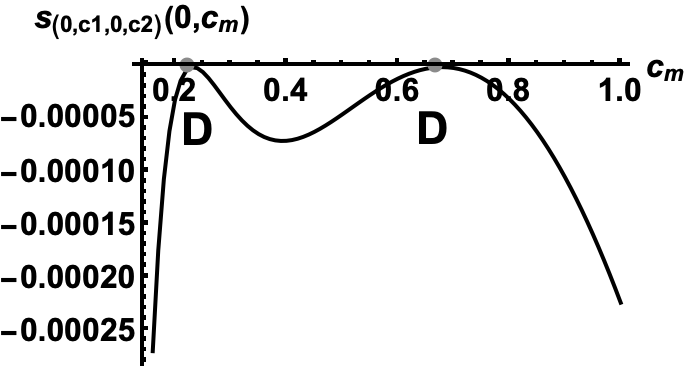}
\end{subfigure}
\caption{TEP for evolving $c$ and fixed $b=0$ and invasion fitness $s_{(0,c_1,0,c_2)}(0,c_m)$ of a mutant $c_m$ at the resident dimorphic coalition, for the system in (\ref{monoone}) and (\ref{monotwo}) and $\rho$ and $\gamma$ as given in (\ref{another12}) and (\ref{another22}). \emph{Left panels}: parameter values in in (\ref{setparam2}). \emph{Right panels}: parameter values in in (\ref{setparam3}). Black continuous line: $c_2$-extinction boundary. Black dotted line: $c_1$-extinction boundary. BP: branching point. D: dimorphic singularity. $++$: $s_{(0,c_1)}(0,c_2)>0$, $s_{(0,c_2)}(0,c_1)>0$, i.e. mutually invasible strategies.  $+-$: $s_{(0,c_1)}(0,c_2)>0$, $s_{(0,c_2)}(0,c_1)<0$, i.e. invasion and substitution. $-+$: $s_{(0,c_1)}(0,c_2)<0$, $s_{(0,c_2)}(0,c_1)>0$, i.e. the mutant strategy cannot invade. Grey dashed line: $c_1$-isocline. Grey dash-dotted line: $c_2$-isocline. Red arrows: evolutionary dynamics.}
\label{fig:sec42}        
\end{figure}

As a second step, we let $b$ evolve to positive values and give the evolutionary phase plane when the trait is two-dimensional. The aim is to investigate if the evolutionary branching point by \cite{geritz2007evolutionary} can be invaded by a mutant with density dependent strategy. In Figure \ref{fig:sec43}, one can see that in all cases evolution leads away from the branching point on the $c$-axis towards positive values of $b$ and destroys the possibility of evolutionary branching, as no intersection of the isoclines is present. While the dynamics for $\rho$ and $\gamma$ as given in (\ref{another1}) and (\ref{another2}) remains undefined when $c$ is very small, in case of $\rho$ and $\gamma$ in (\ref{another12}) and (\ref{another22}), the isoclines intersect on the HB-line and then coincide in the region of convergence to the stable equilibrium. In this particular scenario, the ecosystem evolves towards the most complex dynamics on the edge of stability. Similar behaviour was described by \cite{ellner1995chaos} in their conjecture on the evolution towards the edge of chaos and has found support in the works by \cite{dercole2006coevolution}, \cite{gragnani1998food}, \cite{rai2004chaos}, \cite{rai2006evolving}, \cite{rinaldi1999top}.

\begin{figure}[H]
\centering
\begin{subfigure}[b]{.5\linewidth}
\includegraphics[width=6.5cm, height=6.5cm]{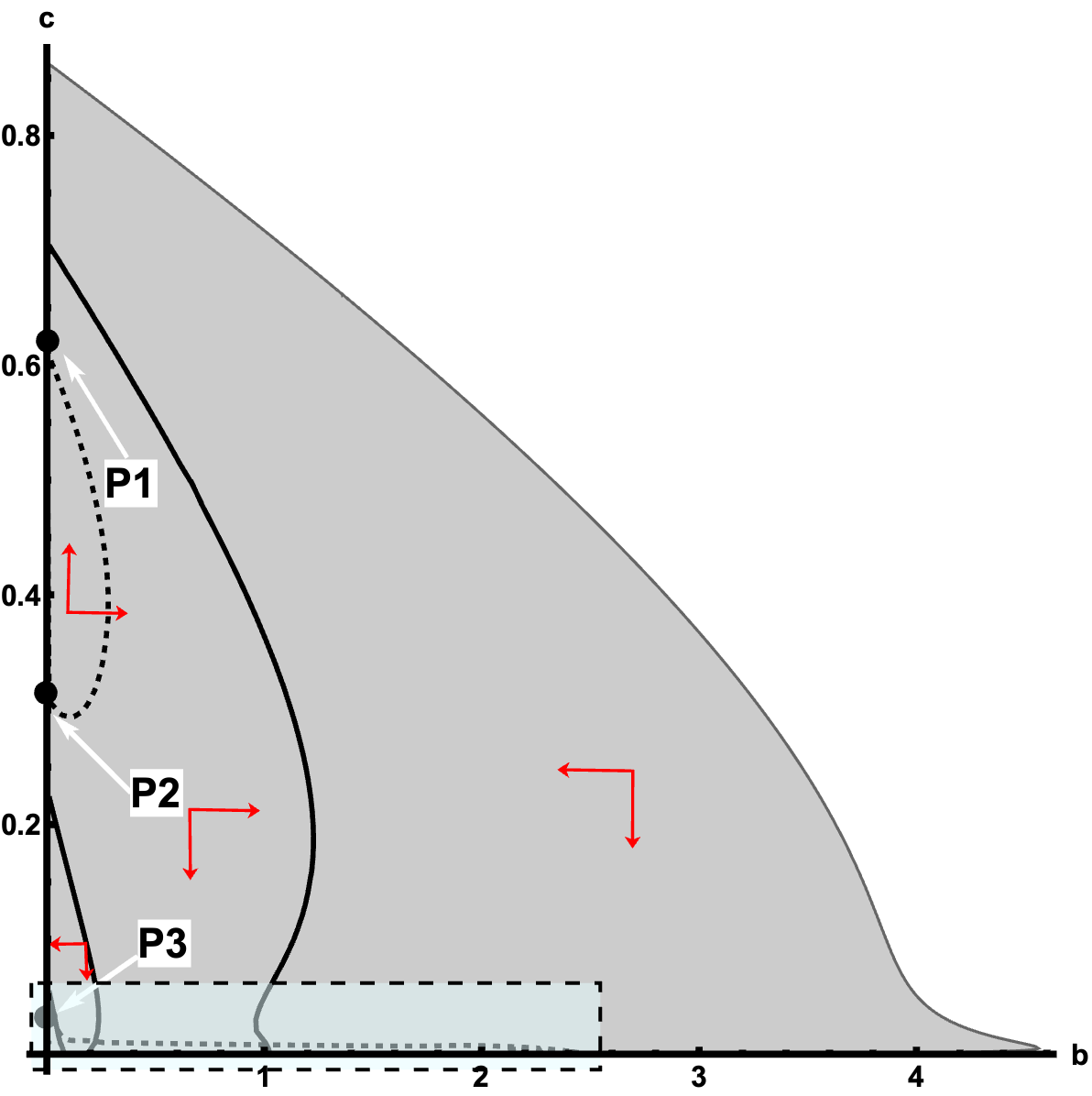}
\end{subfigure}
\begin{subfigure}[b]{.5\linewidth}
\includegraphics[width=6.5cm, height=6.5cm]{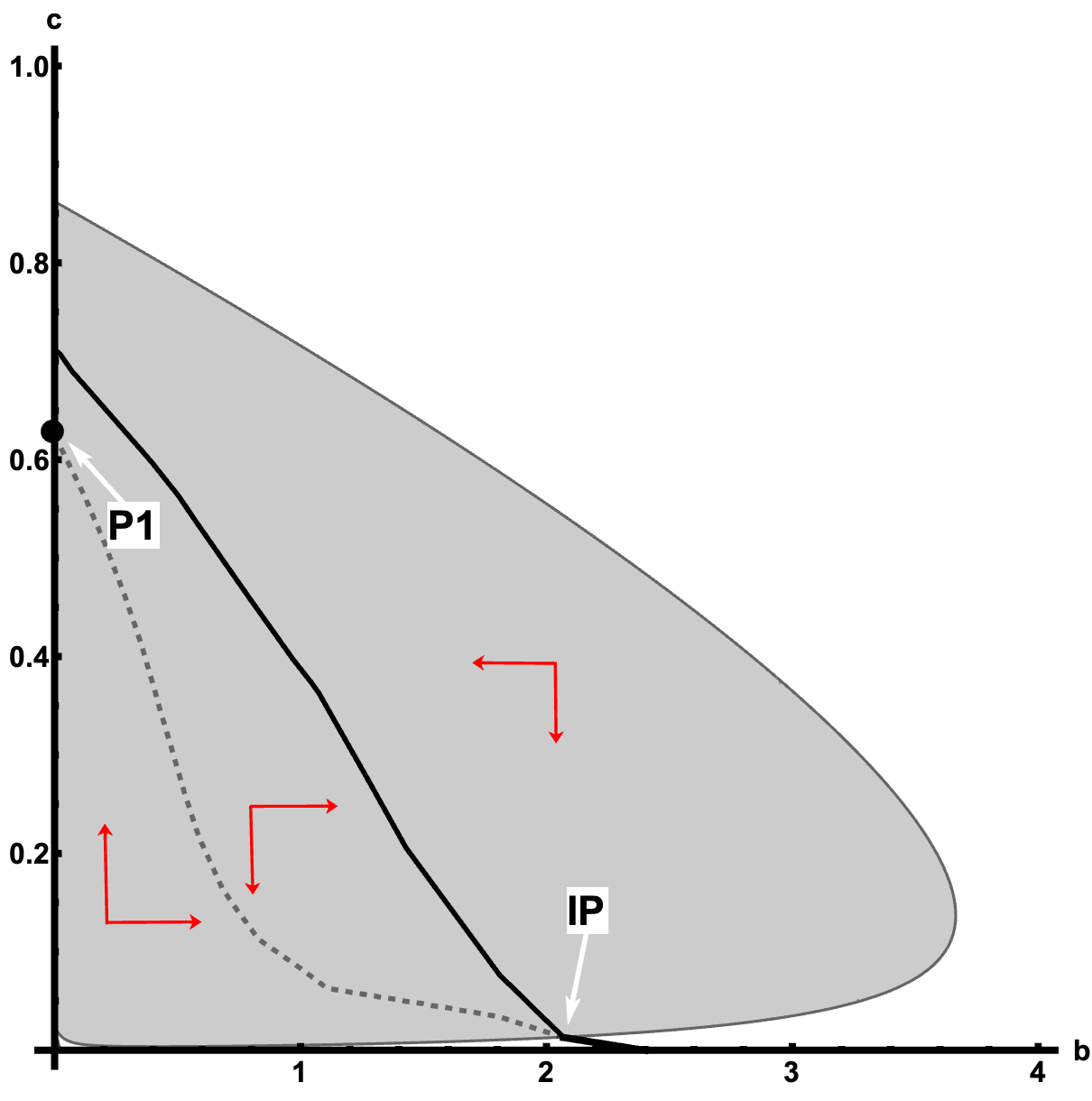}
\end{subfigure}
\begin{subfigure}[b]{.4\linewidth}
\includegraphics[width=6.5cm, height=6.5cm]{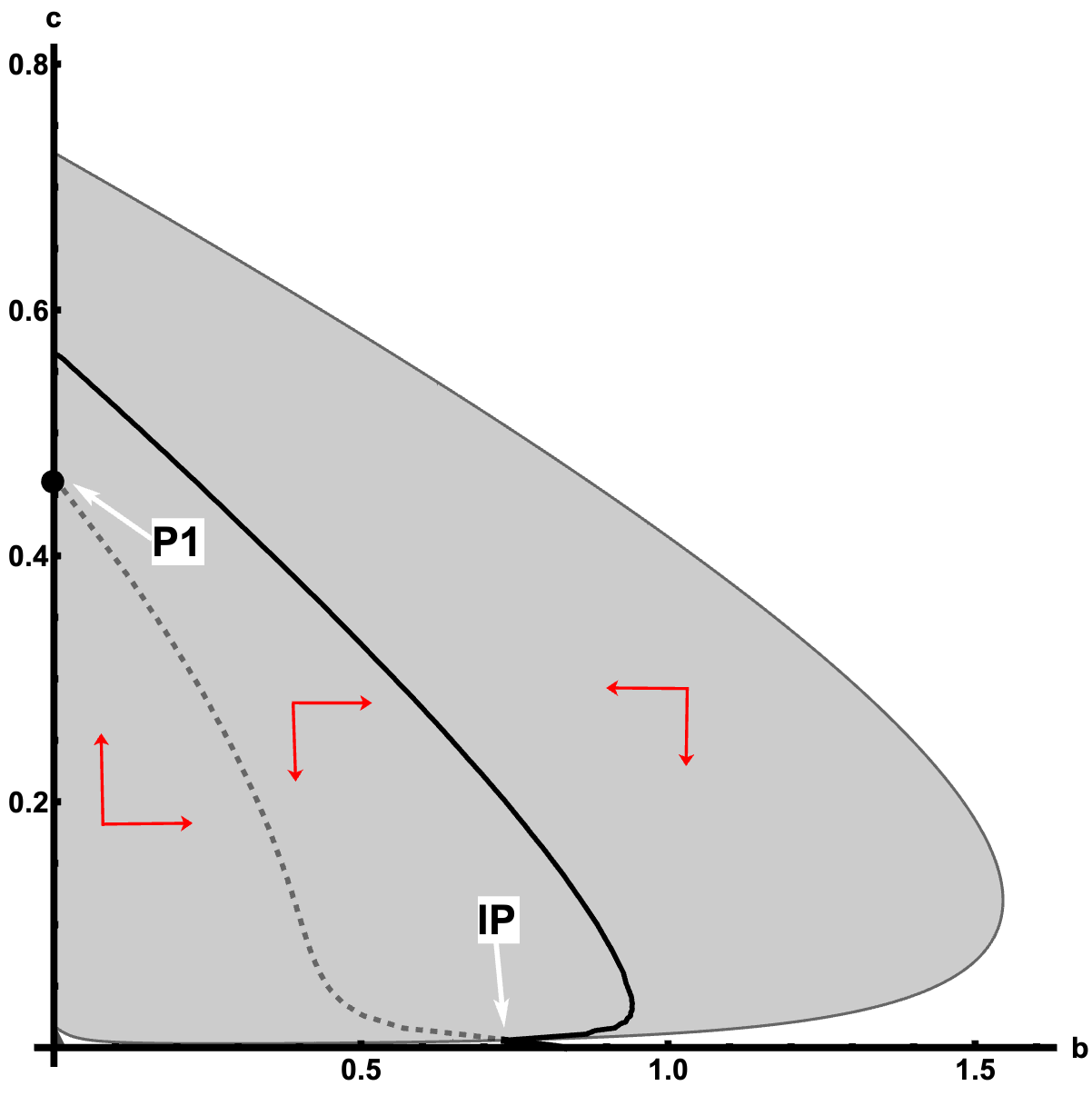}
\end{subfigure}
\caption{Evolutionary phase plane for the evolving parameter pair $(b,c)$. \emph{Top panel}: $\rho$ and $\gamma$ as given in (\ref{another1}) and (\ref{another2}) with parameter values in (\ref{setparam1}). \emph{Bottom panels}: $\rho$ and $\gamma$ as given in (\ref{another12}) and (\ref{another22}) and parameter values in (\ref{setparam2}) (\emph{left}) and (\ref{setparam3}) (\emph{right}). In light grey, the region of convergence to the stable limit cycle. P1, P2, P3: singularities on the $c$-axis. IP: intersection point of the isoclines on the HB-line. Continuous line: $b$-isocline. Dashed line: $c$-isocline. Red arrows: direction of evolution. Shaded area: numerically unreliable.}
\label{fig:sec43}        
\end{figure}

In Figure \ref{fig:sec44}, we finally give the contour-plot of the invasion fitness $s_{(c_1,0,c_2,0)}(b,c)$ in the $(b,c)$-plane to check if the evolutionary stable coexistence of two fixed strategies $c_1$ and $c_2$ and fixed $b=0$ can be invaded by a density dependent strategy $(b,c)$. As the fitness of a mutant in the resident environment settled by the dimorphic coalition is positive when $b$ evolves, we confirm that the strategy $(b,c)$ with positive invasion fitness can invade the dimorphic singularities for $\rho$ and $\gamma$ in (\ref{another12}) and (\ref{another22}) and both sets of parameter values in (\ref{setparam2}) and (\ref{setparam3}).

\begin{figure}[H]
\centering
\begin{subfigure}[b]{.5\linewidth}
\includegraphics[width=6cm, height=6cm]{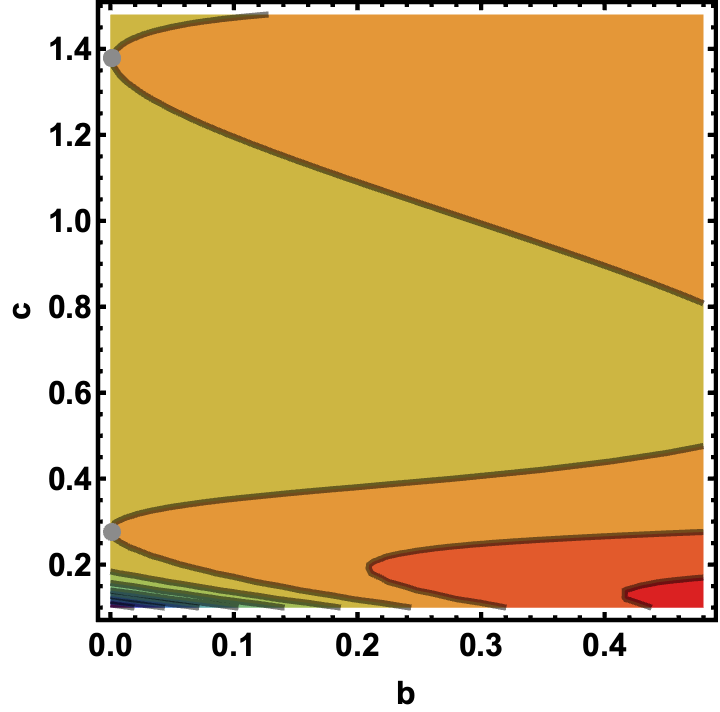}
\end{subfigure}
\begin{subfigure}[b]{.1\linewidth}
\includegraphics[width=1.5cm, height=6.4cm]{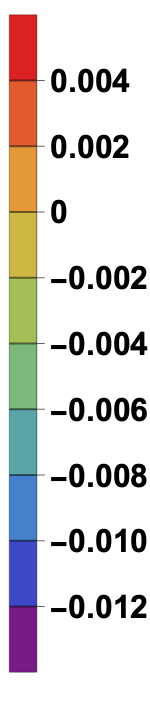}
\end{subfigure}
\begin{subfigure}[b]{.5\linewidth}
\includegraphics[width=6cm, height=6cm]{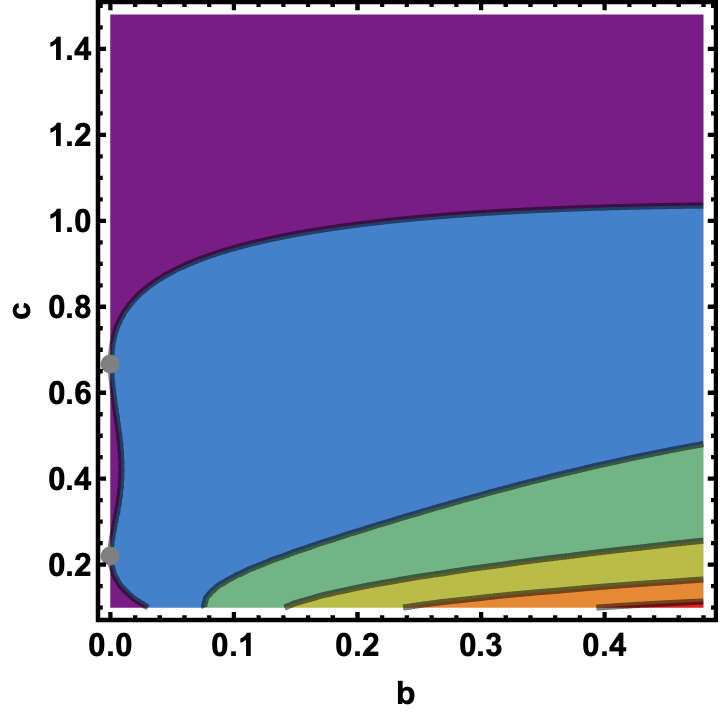}
\end{subfigure}
\begin{subfigure}[b]{.1\linewidth}
\includegraphics[width=1.5cm, height=6.4cm]{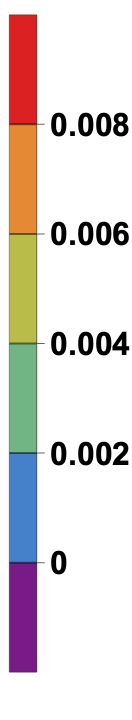}
\end{subfigure}
\caption{Can the evolutionarily stable coalition for evolving $c$ but fixed $b=0$ be invaded if $b$ is allowed to evolve as well? Contour-plots of the invasion fitness $s_{(c_1,0,c_2,0)}(b,c)$ in the $(b,c)$-plane and corresponding plotlegends. $\rho$ and $\gamma$ as given in (\ref{another12}) and (\ref{another22}) and parameter values in (\ref{setparam2}) (\emph{top panel}) and (\ref{setparam3}) (\emph{bottom panel}). Grey points: dimorphic coalition for $b=0$.}
\label{fig:sec44}        
\end{figure}

\section{Discussion}\label{sec6}
In this paper we use the adaptive dynamics framework to study the evolution of prey-density dependent handling times in a modified Rosenzweig-MacArthur predator-prey model (\cite{rosenzweig1963graphical}). In the model the handling time has a density dependent component and a density independent component. \\

The biological mechanism underlying the density dependence of the handling time in our model is simple and does not require any advanced cognitive abilities of the predator: during the handling of the prey, the predator can be distracted by the proximity of a prey individual which, with a given probability, tempts the predator to attack. Such events happen more frequently as the prey becomes more common. The overall effect is that the predator's handling of captured prey on average decreases as the population density of prey increases. The inclination to give in to the temptation to attack corresponds to the strength of density dependence.\\

We derive the functional response of the predators which is a Holling type II functional response with density dependent handling time and give the interpretation of the conversion factor of caught prey into predator offspring in terms of the individual ingestion rate per single predator.\\

The mechanistic modelling approach appears fundamental during the derivation of the ecological dynamics and, in particular, for the formulation of the predator functional and numerical responses (see \cite{Berardo:2020aa} for more details on the method). We focus on the \textit{bottom-up} approach to obtain a conversion factor with realistic underlying individual level behaviour and we point out that, in many cases, a phenomenological approach can lead to population functions that are meaningless in terms of individual dynamics. \\

We suggest a one-to-one relation between the conversion factor and the function modelling prey ingestion which looks in line with the dynamics energy budget theory (\cite{van2006introduction}, \cite{sousa2008empirical}, \cite{kooijman2010dynamic}), where birth is energy limited. An alternative to our model would consider an ordinary differential equation for the energy intake according to the idea that the resources are allocated to different metabolic processes and apply time-scale separation between the slow prey-predator dynamics and the dynamics of the energy uptake linked to predation. However, for the purpose of this study, it is preferable and easier to skip these steps and assume that the solution of the equation for the resource energy is given by the function $\rho(\tau)$. \\

We have shown, by means of examples, that a flexible handling time which varies inversely with the prey population density is selectively superior to a fixed handling time in the sense that if it can evolve, then (i) evolutionary branching of fixed handling times as seen in the model of \cite{geritz2007evolutionary} does not happen anymore, and (ii) evolutionarily stable coalitions of fixed handling times as seen in the model of \cite{geritz2007evolutionary} are no longer evolutionarily stable because they can be invaded. \\

We found that as long as the population is cycling, the density dependent component of the handling time evolves in strength up to a point where the population undergoes a supercritical Hopf bifurcation, and the cycles are lost. From that point onwards evolution effectively stops, because without cycles, i.e., in a constant population, there exists a continuum of selectively neutral combinations of the density dependent and independent components where the prey population density is minimised. \\

Increasing the strength of the density dependent component of the handling time was found to have a stabilising effect on the population dynamics. The introduction of a prey-density dependent handling time, however, can lead to more complicated population dynamics than in the original Rosenzweig-MacArthur model. For example, we have seen multiple positive equilibria and the simultaneous existence of stable and unstable limit cycles, neither of these phenomena occurs in the standard model. Much of this remains underexplored because our primary interest was evolutionary, and in the examples that we studied evolution steered clear from areas in the evolutionary trait space where such complications occur.\\

A practical point: During our analysis we often had to deal with large limit cycles that came very close to the predator or prey axes. Aside from the question whether such cycles are biologically reasonable (because of the high risk of random extinction when the prey or predator is rare), they also pose a numerical problem. The problem arises because successive loops of an orbit converging to the limit cycle become extremely closely packed where the limit cycle almost touches one of the axes. Small numerical errors during this phase of the limit cycle causes big errors elsewhere, as a consequence of which integration of, e.g., fitness over the limit cycle becomes unreliable. We found that the problem is almost completely eliminated by using a non-linear scaling of the population densities that effectively stretches the space near the axes (but not elsewhere) so that the successive loops of an orbit converging to the limit cycle are no longer closely packed. For the details we refer to the \ref{app6}.

\appendix

\section{Mechanistic derivation of the functional response}\label{app1}

We consider the following individual fast time scale reactions for the predators divided between searching $S$ and handling $H$
\begin{eqnarray}\nonumber
& \mathcircled{S} &  \xrightarrow{\tilde{a}x} \mathcircled{H} \quad  \textit{the searching predator attacks the prey and starts handling}\\ \nonumber
& \mathcircled{H} &  \xrightarrow{\tilde{b}x} \mathcircled{S}\quad \textit{the predator quits handling with prey-dependent rate}\\ \label{indlevel}
& \mathcircled{H} &  \xrightarrow{\tilde{c}} \mathcircled{S} \quad \textit{the predator quits handling spontaneously}
\end{eqnarray}

The dynamical system for the interactions in (\ref{indlevel}) is given by

\begin{equation}\label{dyn1}
\left\{\begin{array}{l}
\frac{dx}{d\tau}=g(x)x-\tilde{a}x\tilde{S}\\
\frac{d\tilde{y}}{d\tau}=\gamma(h)f(x,h)\tilde{y}-\delta \tilde{y}\\
\frac{d\tilde{S}}{d\tau}=-\tilde{a}x\tilde{S}+(\tilde{b}x+\tilde{c})\tilde{H}-\delta \tilde{S}\\
\frac{d\tilde{H}}{d\tau}=\tilde{a}x\tilde{S}-(\tilde{b}x+\tilde{c})\tilde{H}-\delta \tilde{H}
\end{array}\right.
\end{equation}
where $g(x)$ denotes the \emph{per-capita} growth rate of the prey population if the predator is absent, $f(x,h)$ is the predator functional response and $\gamma(h)$ is the total number of predator offspring from a single catch as given in (\ref{biggammadef}). The parameter $\delta$ represents the death rate of the predators, that we assume to be the same for searching and handling predators.  
The total prey $x$ and total predators $y$ are constant on the fast time scale. Moreover we suppose that the predator population size is much smaller than the prey population $y\ll x$: in this way the total prey population is not affected by the capture on the fast time scale. We assume $\varepsilon>0$ a small dimensionless scaling parameter and use the following scalings for the rates of the fast time reactions and for the predator population: $\tilde{a}=\varepsilon^{-1}a$, $\tilde{b}=\varepsilon^{-1}b$, $\tilde{c}=\varepsilon^{-1}c$, $\tilde{y}=\varepsilon y$, $\tilde{H}=\varepsilon H$, $\tilde{S}=\varepsilon S$. The slow-fast equations corresponding to the system above are then given by

\begin{equation}\label{dyn2}
\left\{\begin{array}{l}
\frac{dx}{d\tau}=g(x)x-\varepsilon^{-1}a x\varepsilon S\\
\frac{d\varepsilon y}{d\tau}=\gamma(h)f(x,h)\varepsilon y-\delta \varepsilon y \\
\frac{d\varepsilon S}{d\tau}=-\varepsilon^{-1}ax\varepsilon S+(\varepsilon^{-1}bx+\varepsilon^{-1}c)\varepsilon H-\delta \varepsilon S\\
\frac{d\varepsilon H}{d\tau}=\varepsilon^{-1}ax\varepsilon S-(\varepsilon^{-1}bx+\varepsilon^{-1}c)\varepsilon H-\delta \varepsilon H
\end{array}\right.
\end{equation}

On the fast time scale with $t=\varepsilon^{-1}\tau$ and $\varepsilon\rightarrow 0$, the equations become

\begin{equation}\label{fastdyn}
\left\{\begin{array}{l}
\frac{dx}{dt}=0\\
\frac{dy}{dt}=0 \\
\frac{dS}{dt}=-axS+(bx+c)H\\
\frac{dH}{dt}=axS-(bx+c)H
\end{array}\right.
\end{equation}

From the equations for the fast dynamics $S$ and $H$, we can therefore derive the equilibrium for the fast variables $\hat{S}=\frac{(bx+c)y}{bx+c+ax}$ and $\hat{H}=\frac{axy}{bx+c+ax}$. The corresponding functional response is given by definition

\begin{equation}\label{funcresp}
f(x,h(x))=\frac{ax\hat{S}}{y}=\frac{ax(bx+c)}{bx+c+ax}=\frac{ax}{1+\frac{1}{bx+c}ax}.
\end{equation}
The function in (\ref{funcresp}) is a Holling type II like functional response with density dependent handling time $\frac{1}{bx+c}$.

\section{Scaling of the equations of the resident dynamics}\label{app4}
To reduce the number of parameters we rewrite the system in (\ref{monoone}) and (\ref{monotwo}) in terms of the dimensionless quantities $\tilde{x}=\frac{x}{K}$, $\tilde{y}=\frac{\beta y}{r}$, $\tilde{b}=\frac{b}{\beta}$, $\tilde{c}=\frac{c}{\beta K}$, $\tilde{\delta}= \frac{\delta}{r}$, $\tilde{t}=rt$, $\tilde{\tau}=\beta K\tau$ and the functions $\tilde{h}(\tilde{x})=\beta K h(K \tilde{x})$, $\tilde{\rho}(\tilde{\tau})=\frac{1}{r}\rho\left(\frac{\tilde{\tau}}{\beta K}\right)$, $\tilde{f}(\tilde{x},\tilde{h}(\tilde{x}))=\frac{1}{\beta K}f(K\tilde{x},\frac{\tilde{h}(K\tilde{x})}{\beta K})$, $\tilde{\gamma}(\tilde{h}(\tilde{x}))=\frac{\beta K}{r}\gamma(\frac{\tilde{h}(K\tilde{x})}{\beta K})$. Then we get

\begin{eqnarray}
\frac{d\tilde{x}}{d\tilde{t}}&=&\tilde{x} (1-\tilde{x})-\tilde{f}(\tilde{x},\tilde{h}(\tilde{x}))\tilde{y}\\
\frac{d\tilde{y}}{d\tilde{t}}&=&\tilde{\gamma}(\tilde{h}(\tilde{x})) \tilde{f}(\tilde{x},\tilde{h}(\tilde{x}))\tilde{y}-\tilde{\delta}\tilde{y}
\end{eqnarray}

with

\begin{eqnarray}
\tilde{h}(\tilde{x})&=&\frac{1}{\tilde{b}\tilde{x}+\tilde{c}}\\
\tilde{f}(\tilde{x},\tilde{h}(\tilde{x}))&=&\frac{\tilde{x}}{1+\tilde{x}\tilde{h}(\tilde{x})}\\
\tilde{\gamma}(\tilde{h}(\tilde{x}))&=&\int_0^\infty \tilde{\rho}(\tilde{\tau}) e^{-\frac{\tilde{\tau}}{\tilde{h}(\tilde{x})}} d\tilde{\tau}
\end{eqnarray}

Dropping all tildes, we effectively get the original system in (\ref{monoone}) and (\ref{monotwo}) wit $r=1$, $K=1$ and $\beta=1$.

\section{Sufficient but non-necessary result on the uniqueness of the interior equilibrium}\label{app2}

The interior equilibrium points are the real and positive intersections of the prey and predator isoclines. The prey isocline is obtained by imposing $\frac{dx}{dt}=0$ and presents the following formulation

\begin{equation}\label{preyiso}
y=\frac{(1-x) (b x+c+x)}{b x+c}.
\end{equation}
The prey isocline is positive for $x<1$ and has a unique positive root at $x=1$, while it intersects the $y$-axis at $y=1$. The function is increasing for $\frac{-b c-\sqrt{(b+1) c (b+c)}+c}{b (b+1)}\leq x\leq \frac{\sqrt{(b+1) c (b+c)}-(b+1) c}{b (b+1)}$ and decreasing for $x>\frac{\sqrt{(b+1) c (b+c)}-(b+1) c}{b (b+1)}$ with $\frac{\sqrt{(b+1) c (b+c)}-(b+1) c}{b (b+1)}<1$. Finally, the prey isocline is concave for every positive $x$.\\

The predator isocline is given by the positive root of the equation

\begin{equation}\label{prediso}
0=\gamma(h(x)) f(x,h(x)) -\delta 
\end{equation}

Given that $f(x,h(x))$ is positive and concave, monotonically increasing in $x$ and converging to an asymptote of the form $y=m x + n$ (with $m=\underset{x\to \infty }{\text{lim}}\frac{1}{1+\frac{ x}{\text{b} x+c}}$ and $n=\underset{x\to \infty }{\text{lim}}\frac{ x}{1+\frac{ x}{\text{b} x+c}}-m x$)  and $\gamma(h(x))$  is positive and monotonically decreasing in $x$ down to $0$, then their product is a positive function. We compute the value for $\gamma(h(x)) f(x,h(x))|_{x=0}=0$ and the limit of $\gamma(h(x))f(x,h(x))$ for $x\rightarrow \infty$ that is either $0$ or a constant. However, we do not have information on the monotonicity of the product of the conversion factor and the functional response. \\

In order to obtain existence of a unique interior equilibrium, we need the $x$-coordinate of the equilibrium point to be less than $1$ (the normalised carrying capacity) and the solution of the equation in (\ref{prediso}) to be unique. In many cases, the easiest way to check the number of roots of (\ref{prediso}) is through numerical methods. However, when $f(x,h(x))$ is given and $\gamma(h(x))$ allows it, it is possible to check the number of positive solutions for a certain class of functions $\gamma(h(x))$ by using the following lemma:

\begin{lem}\label{lemmadue}
Suppose $\log\left[ \gamma(h(x)) \right]'\neq - \log\left[ f(x,h(x)) \right]'$ for all $x\in (0,1)$; then, the equation $\gamma(h(x))f(x,h(x))=\delta$ has at most one root in $(0,1)$.
\end{lem}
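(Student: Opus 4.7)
}

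The plan is to show that the hypothesis forces the product $G(x) := \gamma(h(x))\,f(x,h(x))$ to be strictly monotonic on $(0,1)$, from which the ``at most one root'' conclusion is immediate. The key observation is the logarithmic derivative identity
\begin{equation*}
\bigl(\log G(x)\bigr)' \;=\; \bigl(\log \gamma(h(x))\bigr)' + \bigl(\log f(x,h(x))\bigr)',
\end{equation*}
which is valid wherever $G(x) > 0$.

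First I would establish positivity: on $(0,1)$ we have $f(x,h(x)) > 0$ (from the closed form in Appendix \ref{app1}, since $x > 0$ and $bx+c > 0$) and $\gamma(h(x)) > 0$ (by assumption \ref{conddue} together with the fact that $\rho \not\equiv 0$ for any biologically meaningful $\rho$), so $G(x) > 0$ throughout $(0,1)$ and the logarithm is well defined and differentiable.

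Next, the hypothesis $\bigl(\log \gamma(h(x))\bigr)' \neq -\bigl(\log f(x,h(x))\bigr)'$ on $(0,1)$ is exactly the statement that $\bigl(\log G(x)\bigr)' \neq 0$ on $(0,1)$. Since the map $x \mapsto \bigl(\log G(x)\bigr)'$ is continuous on $(0,1)$ (inherited from the smoothness of $\gamma$, $f$ and $h$) and nowhere vanishing, the intermediate value theorem forces it to have constant sign on $(0,1)$. Hence $\log G$, and therefore $G$ itself, is strictly monotonic on $(0,1)$.

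A strictly monotonic function attains every value at most once, so the equation $G(x) = \delta$ admits at most one solution in $(0,1)$, which is the claim. The only potential subtlety I anticipate is confirming the standing regularity assumptions that make the logarithmic derivative available; once positivity and $C^1$ smoothness of $\gamma$, $f$ and $h$ on $(0,1)$ are in hand, the rest of the argument is a one-line application of monotonicity and continuity, so I do not foresee a genuine technical obstacle.
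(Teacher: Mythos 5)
Your proposal is correct and rests on the same core observation as the paper's proof, namely that the hypothesis is precisely the statement that $\bigl(\gamma(h(x))f(x,h(x))\bigr)'$ never vanishes on $(0,1)$, hence the product is injective there; the paper phrases this contrapositively via Rolle's theorem (two roots would force a critical point), while you argue directly that the nonvanishing continuous log-derivative has constant sign and gives strict monotonicity. The difference is only one of packaging, so your argument matches the paper's in substance.
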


\begin{proof}
Suppose that there are two roots. Then, by the mean value theorem, there exists $x_0 \in (0,1)$ such that $\left[\gamma(h(x))f(x,h(x))\right]'=0$ at $x=x_0$. Hence, $\log \left[\gamma(h(x)) \right]'= - \log \left[f(x,h(x)) \right]'$, that is a contradiction.
\end{proof}

The condition on the function $\gamma(h(x))$ for the uniqueness of the interior equilibrium in Lemma \ref{lemmadue} is sufficient but not necessary. Indeed, it possible to find $x_0$ such that $\log\left[ \gamma(h(x)) \right]'= - \log\left[ f(x,h(x)) \right]'$ for $x=x_0$, but only one positive solution of the equation $\gamma(h(x)) f(x,h(x))=\delta$ is in the range $(0,1)$ and then feasible. \\

Lemma \ref{lemmadue} suggests a graphical means to see at one glance whether a given function $\gamma(h(x))$ does not (or may) permit multiple positive equilibria for a given function $f(x)$. In Figure \ref{fig:appB1}, we have plotted $-\log[f(x,h(x))]$ with $f(x,h(x))=\frac{x}{1+\frac{1}{bx+c}x}$ and several vertical translations as functions of $x$. These lines are \emph{critical functions} for $\log[\gamma(h(x))]$ whose derivative must be nowhere the same as that of $-\log[f(x,h(x))]$ in order to satisfy the condition of the Lemma. However, if the condition in Lemma \ref{lemmadue} is violated, we cannot exclude uniqueness of the interior equilibrium. Counterexamples are the functions II and III which verify $\log\left[ \gamma(h(x)) \right]'= - \log\left[ f(x,h(x)) \right]'$ for respectively $x=0.09484$ and $x=0.25507$, but give only one solution in the range $(0,1)$ for the equation $\gamma(h(x))f(x,h(x))=\delta$.\\

When two predator isoclines (see Figure \ref{fig:sec31} in the main text) are present, the dynamics becomes more complicated and may, e.g., give rise to an Allee effect in the predator density (see \cite{freedman1986predator}, \cite{crawley1992population},  \cite{kot2001elements}, \cite{turchin2003complex}, \cite{zhu2003bifurcation}, \cite{bate2014disease}). The equilibrium corresponding to the second predator isocline is typically unstable, but the prey-only equilibrium becomes stable, leading to bi-stability with the stable limit cycle surrounding the coexistence equilibrium.\\

For the functions $\rho$ and $\gamma$ that we suggest in this article, the interior equilibrium is unique. In order to verify uniqueness of the interior equilibrium we use numerical methods. \\

\begin{figure}[H]
\centerline{
  \includegraphics[width=15cm,height=5cm]{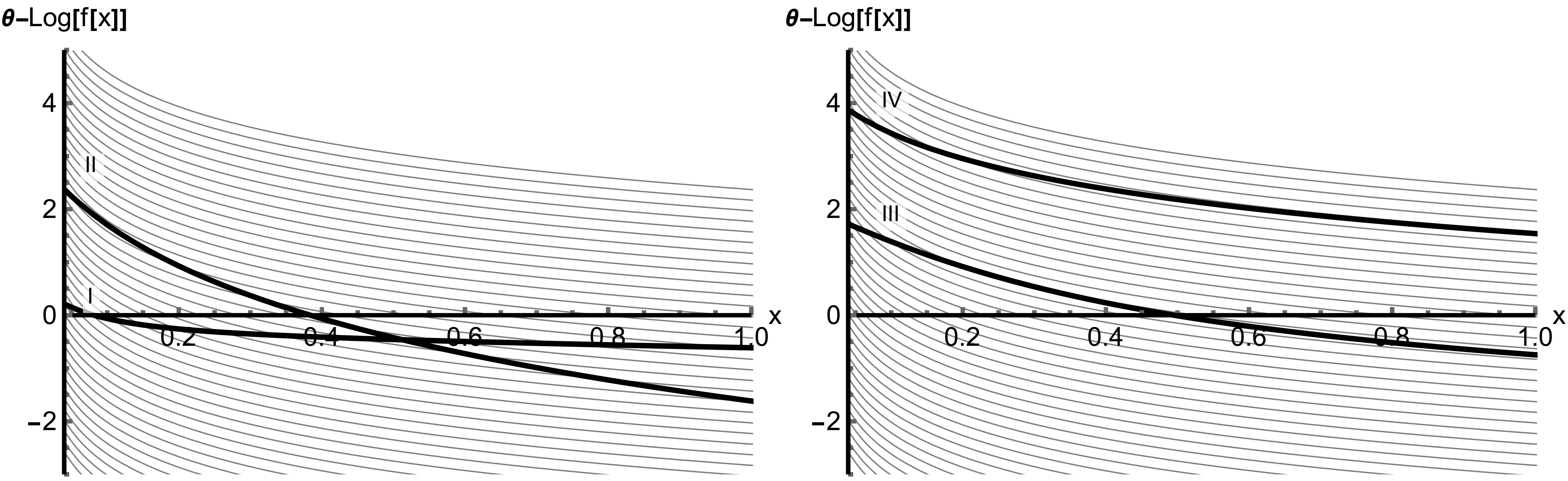}\hspace{0em}}
\caption{Plot of several vertical translations of $-\log\left[f(x,h(x))\right]$ (thin lines) with $\theta\in(-6,2)$ and different $\log\left[ \gamma(h(x)) \right]$ (thick lines). The values of $b$ and $c$ have been fixed at respectively $2$ and $0.1$.}
\label{fig:appB1}      
\end{figure}

\section{On the stability of the equilibria}\label{app5}
An equilibrium is asymptotically stable if the Jacobian matrix evaluated at the equilibrium has a positive determinant and a negative trace. If either or both have the opposite sign, then the equilibrium is unstable. The Jacobian matrix at any positive equilibrium is

\begin{equation}
J(x,y) = 
\begin{pmatrix}
f(x,h(x)) \left(\frac{g(x)}{f(x,h(x))}\right)' & -f(x,h(x))\\
y (\gamma(x) f(x,h(x)))' & 0
\end{pmatrix}
\end{equation}

 and so
 \begin{eqnarray}
 \det(J(x,y))&=& y f(x,h(x)) \left(\gamma(h(x)) f(x,h(x)) \right)' \\
 \text{tr}(J(x,y)) &=& f(x,h(x)) \left(\frac{g(x)}{f(x,h(x))}\right)'
 \end{eqnarray}
 
Thus, the sign of the determinant is given by the sign of $\left( \gamma(h(x)) f(x,h(x)) \right)'$, and the sign of the trace is given by the sign of $\left( \frac{g(x)}{f(x,h(x))} \right)$, both evaluated at the equilibrium prey density. Notice that $\left( \frac{g(x)}{f(x,h(x))} \right)$ is the slope of the $x$-isocline. Also notice that a potential lack of monotonicity of $\gamma(h(x)) f(x,h(x))$ not only the number of positive equilibria (see main text) but also their stability via the sign of the determinant of the Jacobian matrix.

\section{On the change of variables to approximate large cycles}\label{app6}
In this section we give the scaling of variables which results to be fundamental to approximate the large stable cycles for values of the parameters $b$ and $c$ close to the subcritical Hopf bifurcation and to detect the stable and unstable cycles in case of fold bifurcation for small values of $c$. We introduce the variables $u$ and $v$ such that

\begin{equation}\label{varchange}
x=u^m, \quad y=v^n
\end{equation}
with $m,n>1$ (typically in our simulations a convenient choice is $m=32$ and $n=16$). Given the definitions for the prey derivative $\frac{dx}{dt}$ and the predator derivative $\frac{dy}{dt}$ in (\ref{monoone}) and (\ref{monotwo}) , we derive the system of equations for $\frac{du}{dt}$ and $\frac{dv}{dt}$ 

\begin{eqnarray}\label{equ}
\frac{du}{dt}&=& \frac{1}{m} x^{\frac{1}{m}-1} \frac{dx}{dt} \Big|_{x=u^m, y=v^n},\\\label{eqv}
\frac{dv}{dt}&=& \frac{1}{n} y^{\frac{1}{n}-1} \frac{dy}{dt} \Big|_{x=u^m, y=v^n},
\end{eqnarray}
We solve the system in (\ref{equ}) and (\ref{eqv}) with initial conditions $u_0=\sqrt[m]{x_0}$ and $v_0=\sqrt[n]{y_0}$. The solution for $x$ and $y$ is obtained by applying back the equalities in  (\ref{varchange}).\\

In Figure \ref{fig:appE1}, we apply the change of variables to investigate the fold bifurcation and subcritical Hopf bifurcation in Figure \ref{fig:sec32}a-c of Section \ref{sec3}. In particular, we plot a \emph{forward orbit} (black) and a \emph{backward orbit} (red) starting with the same initial conditions but differing in the sign of the gradients $\frac{1}{m} x^{\frac{1}{m}-1} \frac{dx}{dt} \Big|_{x=u^m, y=v^n}$ and $\frac{1}{n} y^{\frac{1}{n}-1} \frac{dy}{dt} \Big|_{x=u^m, y=v^n}$ for $u$ and $v$, respectively. 
In the \emph{top panels}, starting for a small value of $c$, the forward orbit converges to the interior equilibrium and the backward orbit loops one time around the equilibrium in a clock-wise direction and then disappears to $x\rightarrow\infty$. In the \emph{middle panels}, the system has just crossed a fold bifurcation of a stable and an unstable cycle, while in the \emph{bottom panels}, the unstable cycle shrinks on the interior equilibrium which becomes unstable via the subcritical Hopf bifurcation.

\newpage

\begin{figure}[H]
\centering
\begin{subfigure}[b]{0.7\textwidth}
\includegraphics[width=\textwidth]{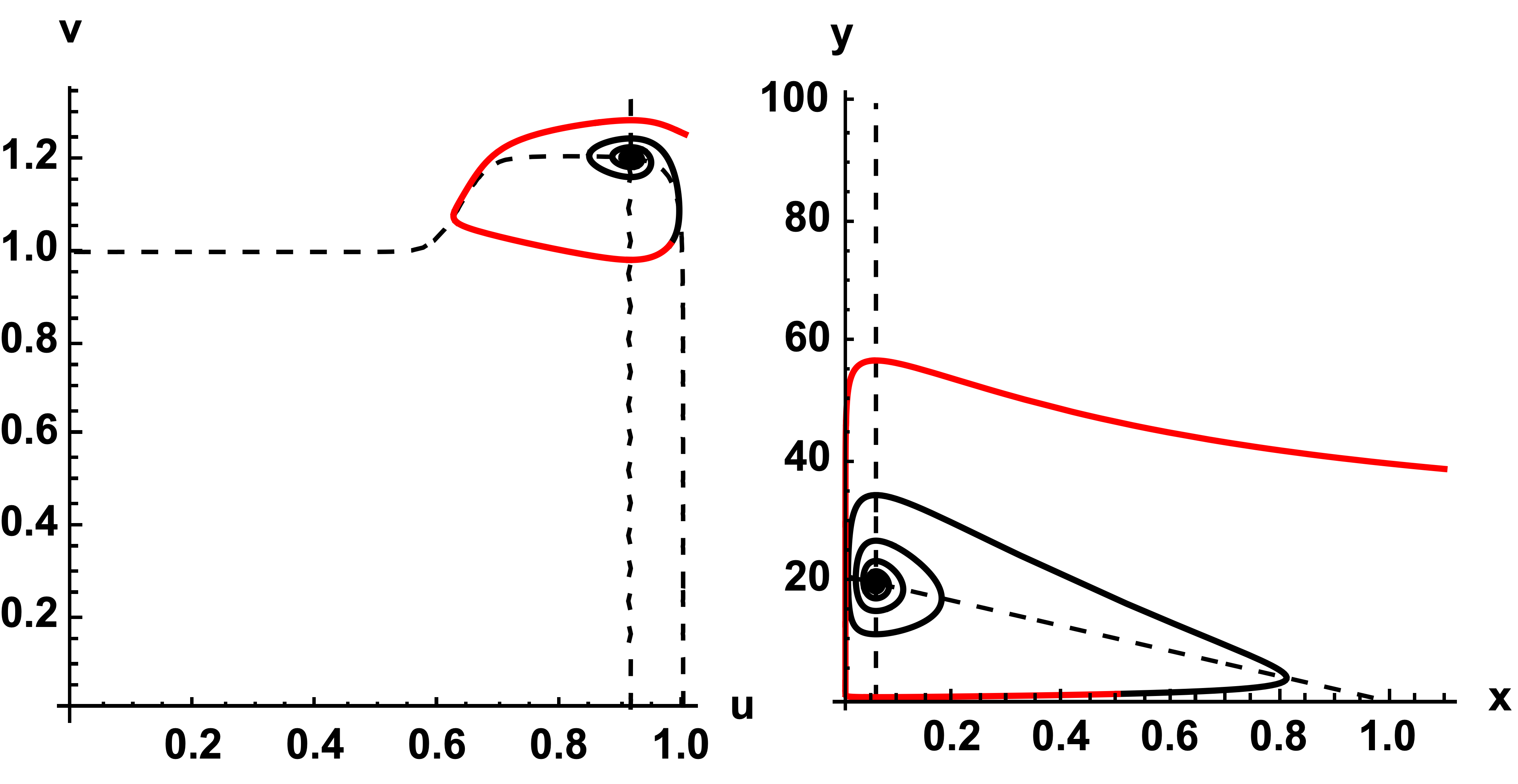}\hspace{0em}
\end{subfigure}
\begin{subfigure}[b]{0.7\textwidth}
\includegraphics[width=\textwidth]{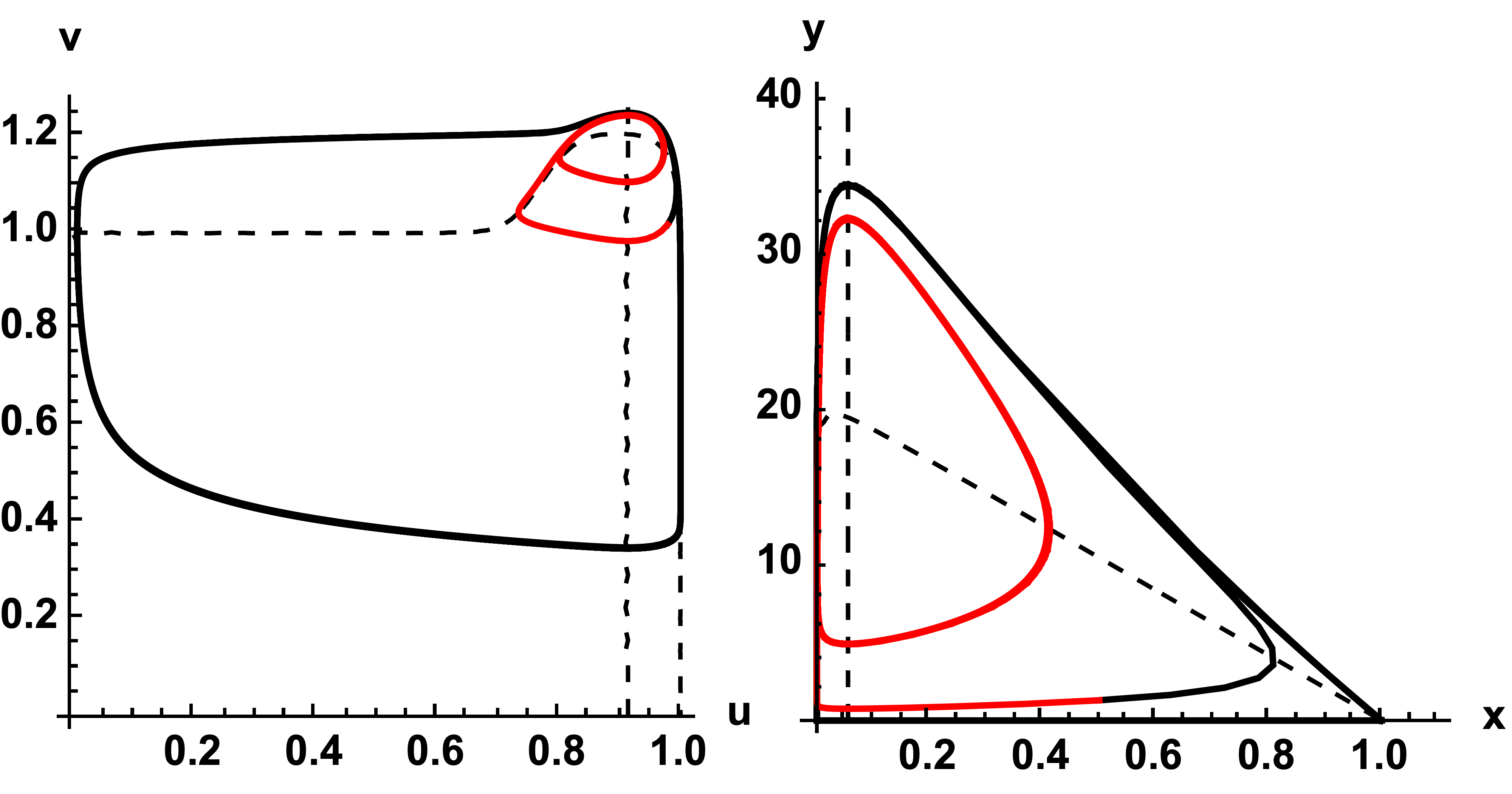}\hspace{0em}
\end{subfigure}
\begin{subfigure}[b]{0.7\textwidth}
\includegraphics[width=\textwidth]{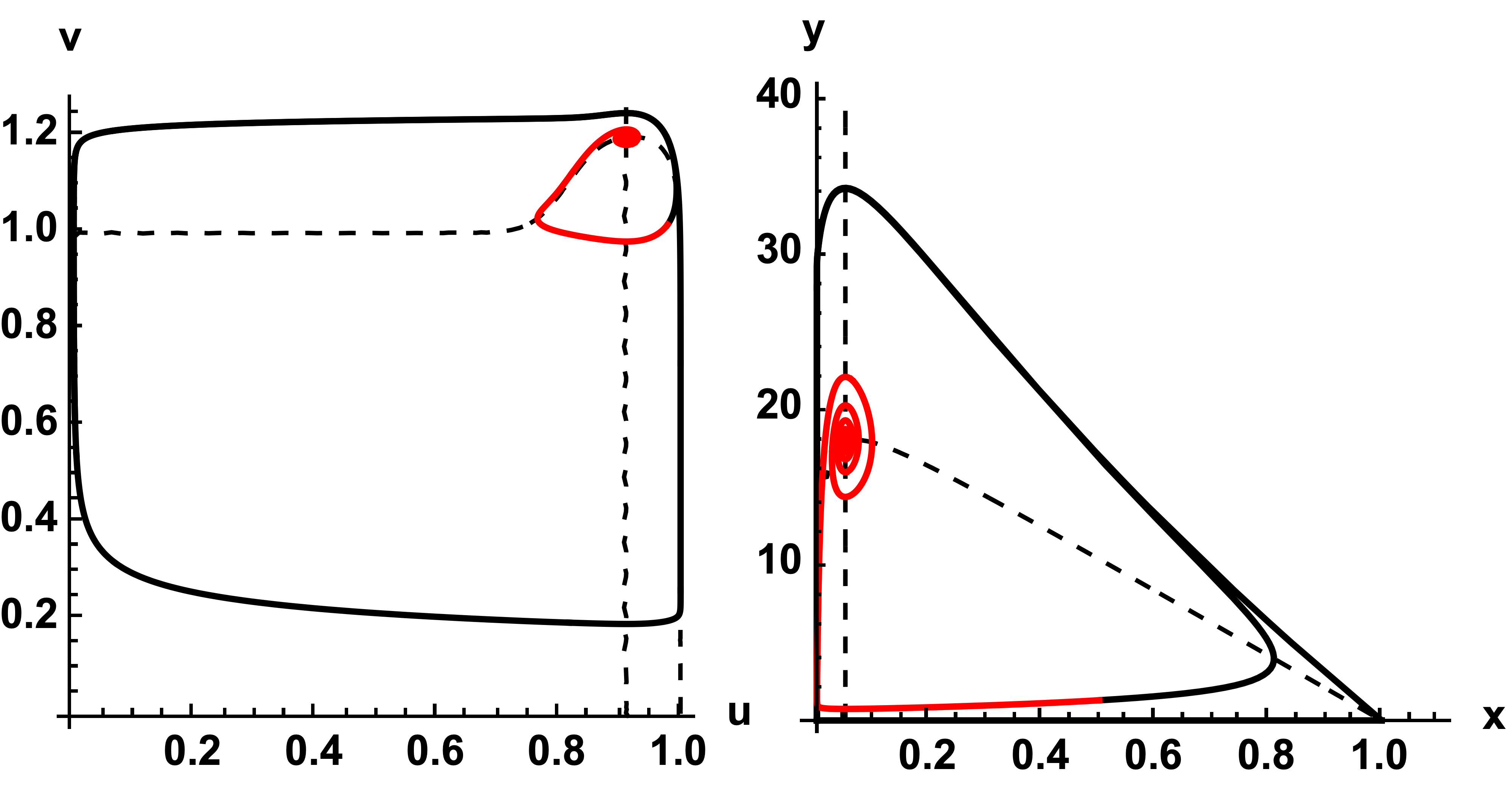}\hspace{0em}
\end{subfigure}
\caption{Plot of the solution of equations (\ref{equ}) and (\ref{eqv}) for the variable $v$ and corresponding $y$ with $\rho$ and $\gamma$ as given in (\ref{another1}) and (\ref{another2}) and parameter values: $m=32$, $n=16$, $\gamma_0=13.5651$, $\gamma_1=0.10002$, $\tau_0=0.11712$, $\tau_1=20$, $b=0.05$. \emph{Top panels}: $c=0.0000001$. \emph{Middle panels}: $c=0.00005$. \emph{Bottom panels}: $c=0.0003$. Initial conditions: $x_0=0.5$, $y_0=1.3$.  Dashed lines: system isoclines. Red lines: backward orbit. Black lines: forward orbit.}
\label{fig:appE1}        
\end{figure}

\section{On the optimisation principle for the prey density in a constant environment}\label{app3}
\subsection{Constant conversion factor}
We assume that the resident dynamics is settled on the interior equilibrium and the conversion factor is constant $\gamma(h)=\gamma_0$. We recall the invasion fitness for a mutant strategy $(b_m,c_m)$

\begin{equation}\label{futon}
s_{(b,c)}(b_{m},c_{m})=\gamma_0 f(x,h_{b_m,c_m}(x))-\delta.
\end{equation}

We assume the functional response being a Holling type II functional response with density dependent handling time $h_{b_m,c_m}(x)$ and therefore an increasing function of the prey density. Thus, the function $s_{(b,c)}(b_m,c_m)$ is uniformly-monotonous increasing in $x$ for all $(b_m,c_m)$. It follows that given a mutant strategy $(b_m,c_m)$ such that $x_{b_m,c_m}<x_{b,c}$ the invasion fitness verifies $0=s_{(b_m,c_m)}(b_m,c_m)<s_{(b,c)}(b_m,c_m)$, and viceversa when $x_{b_m,c_m}>x_{b,c}$, $0=s_{(b_m,c_m)}(b_m,c_m)>s_{(b,c)}(b_m,c_m)$. We conclude that the species with strategy $(b_m,c_m)$ can invade the species with strategy $(b,c)$ if and only if $x_{b_m,c_m}<x_{b,c}$. A pessimisation principle is present and evolution leads to minimisation of the prey density.

\subsection{Non decreasing conversion factor}
When the conversion factor $\gamma(h)$ is a non-decreasing function of the handling time, monotonicity of the fitness function must be checked. The invasion fitness for a mutant strategy $(b_m,c_m)$ is given by

\begin{equation}\label{fittwo}
s_{(b,c)}(b_m,c_m)= \gamma(h_{b_m,c_m}(x))f(x,h_{b_m,c_m}(x))-\delta.
\end{equation}
We use the one-to-one relation between $x$ and $h_{b,c}(x)=\frac{1}{bx+c}$ and the substitution $x=\frac{1-ch_{b,c}}{b h_{b,c}}$ in (\ref{fittwo}):

\begin{equation}
s_{(b,c)}(b_m,c_m)=\gamma(h_{b_m,c_m}) f\left(\frac{1-ch_{b_m,c_m}}{b h_{b_m,c_m}}, h_{b_m,c_m} \right).
\end{equation}
Note that if $s_{(b,c)}(b_m,c_m)$ is monotonic in $h_{b,c}$, so does in $x$. \\

While $\gamma(h_{b,c})$ is non-decreasing in $h_{b,c}$ the functional response $f(h_{b,c})=f\left(\frac{1-ch_{b,c}}{b h_{b,c}}, h_{b,c} \right)$ is decreasing in $h_{b,c}$. Therefore, we cannot exclude non-monotonicity and, if that is the case, then $\frac{ds(h)}{dh}=0$ for some value of $h$ (with $s(h)$ being a short notation for $s_{(b,c)}(b_m,c_m)$). Note that the trait $(b_m,c_m)$ is absorbed in the definition of $h$, such that if $s(h)$ is monotonic in $h$, then it is uniformly monotonic (i.e. for every parameter pair $(b_m,c_m)$).\\

It follows that we can construct critical functions $\gamma_k(h)$ with $k\in (0,\infty)$ which verify $\frac{ds(h)}{dh}=0$ when $s(h)$ is non-monotonic. In particular, given the equation

\begin{equation}\label{critgammaeq}
\frac{ds(h)}{dh}=\gamma'(h) f(x,h(x))+\gamma(h) f'(x,h(x))=0,
\end{equation}
we obtain the critical functions

\begin{equation}
\gamma_k(h)=\frac{k}{f(x,h(x))}=\frac{kh (c_mh-1-b_m)}{c_mh-1}.
\end{equation}
The functions $\gamma_k(h)$ take value zero for $h=0$ and increase to the asymptote $h=\frac{1}{c}$. \\

A sufficient condition for monotonicity of the function $\gamma(h)$ is that it must be nowhere tangent to the critical functions $\gamma_k(h)$. On the other hand, sufficient and necessary condition for non-monotonicity is that there exist $h\in(0,\frac{1}{c})$ and a value for $k>0$ such that $\gamma(h)$ is tangent to $\gamma_k(h)$, that is $\gamma(h)=\gamma_k(h)$ and $\gamma'(h)=\gamma_k'(h)$. Moreover, the conversion factor must be more convex or more concave than the critical function in a neighbourhood of $h$ and $\frac{ds}{dh}$ must change sign at $h$, i.e. there exist $\varepsilon>0$ such that $\gamma''(h_0)$ does not change sign for every $h_0\in \left(h_{b,c}-\varepsilon, h_{b,c}+\varepsilon \right)$. \\

The sufficient and necessary conditions for non-monotonicity are summarised in Lemma \ref{lemmauno}. In Figure \ref{fig:appC1}, we give examples with the conversion functions that we have discussed in Section \ref{sec2}. For some fixed values of $(b_m,c_m)$, both the functions $\gamma(h)$ are nowhere tangent to the critical functions and we can exclude non-monotonicity. The same result can be checked with the density plot of the prey size $x$ in the two parameter bifurcation plot. In Figure \ref{fig:appC2}, we study three different scenarios where the conversion factor is tangent to a critical function: the functions in the left and middle panels verify the conditions of Lemma \ref{lemmauno}. This is not the case for the function in the left panel, where the condition on the second derivative of $\gamma(h)$ does not apply (condition \ref{lemcondtre} in Lemma \ref{lemmauno}).

\begin{figure}[H]
\centerline{
  \includegraphics[width=15cm,height=6cm]{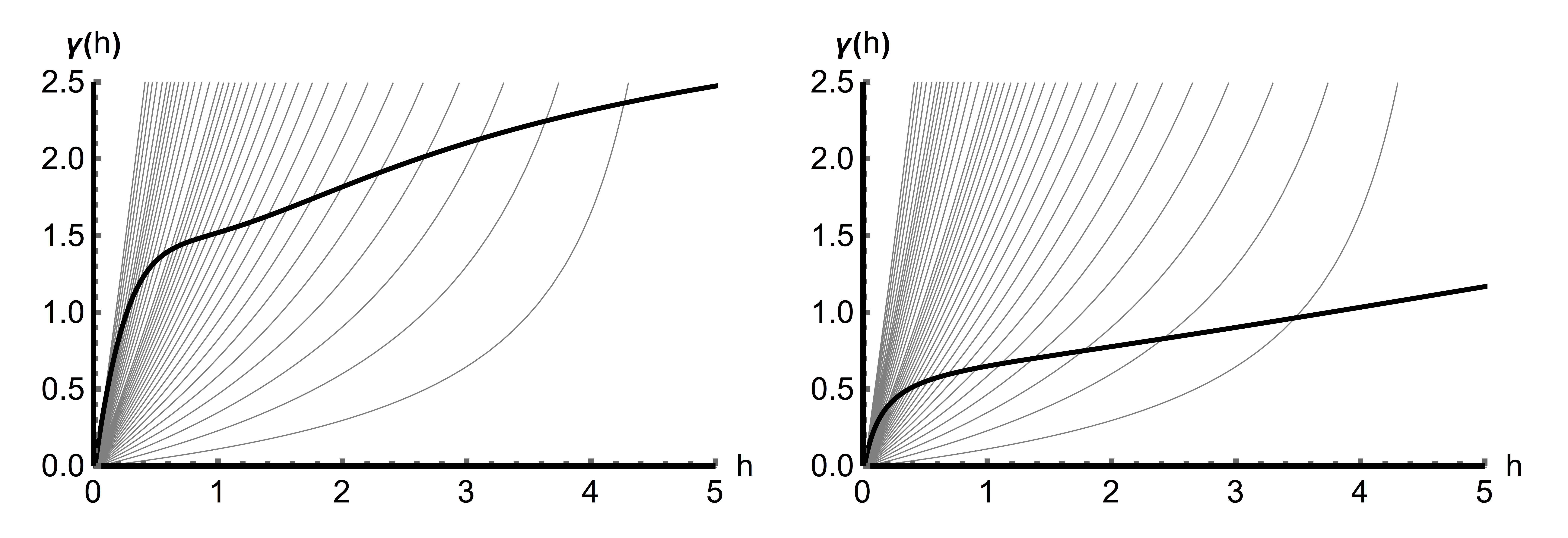}\hspace{0em}}
\caption{The critical functions $\gamma_k(h)$ with $k\in(0,1.2)$, $b_m=4$, $c_m=0.2$. \textit{Left panel:} the conversion function in (\ref{gamma2007}). \textit{Right panel:} the conversion function in (\ref{gamma2021}).}
\label{fig:appC1}      
\end{figure}

\begin{figure}[H]
\centering
\centerline{
  \includegraphics[width=14cm,height=4cm]{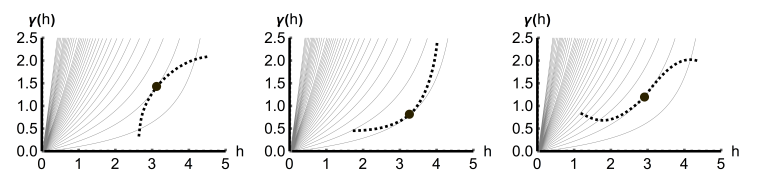}\hspace{0em}}
\caption{The critical functions $\gamma_k(h)$ with $k\in(0,1.2)$, $b_m=4$, $c_m=0.2$. Three possible cases when $\gamma(h)=\gamma_k(h)$ and $\gamma'(h)=\gamma_k'(h)$ (the function $\gamma(h)$ is sketched by the black dotted line). The \textit{left and middle panels} show examples of conversion factors such that $s(h)$ in non-monotonic. \textit{In the right panel}, $\frac{ds}{dh}$ does not change sign and therefore $s(h)$ is monotonic.}
\label{fig:appC2}      
\end{figure}

\bibliographystyle{abbrv}  
\bibliography{myad}
\nocite{*}

\thanks{
{\bf Acknowledgements}\\
This research was funded by the Academy of Finland, Centre of Excellence in Analysis and Dynamics Research.}

%
%
%

\end{document}